\documentclass[12pt]{article}

\usepackage[margin=1in]{geometry}
\usepackage{amsmath,amssymb,amsthm}
\usepackage{booktabs}
\usepackage[authoryear,round]{natbib}
\usepackage[hidelinks,colorlinks=true,linkcolor=blue,citecolor=blue,
            urlcolor=blue]{hyperref}
\usepackage{setspace}
\usepackage{graphicx}
\usepackage{xcolor}
\usepackage{listings}
\usepackage{caption}
\usepackage{enumitem}
\usepackage{float}
\usepackage{array}
\usepackage{multirow}

\theoremstyle{plain}
\newtheorem{theorem}{Theorem}[section]
\newtheorem{lemma}[theorem]{Lemma}
\newtheorem{proposition}[theorem]{Proposition}

\theoremstyle{definition}
\newtheorem{assumption}{Assumption}
\newtheorem{definition}{Definition}[section]
\newtheorem{remark}{Remark}

\newcommand{\E}{\mathbb{E}}
\newcommand{\Var}{\operatorname{Var}}
\newcommand{\Cov}{\operatorname{Cov}}
\newcommand{\difd}{\Delta}
\newcommand{\difp}[1]{\Delta^{#1}}
\newcommand{\ATTgt}{\mathrm{ATT}(g,t)}

\newcommand{\pgvec}{\mathbf{p}}
\newcommand{\iid}{\overset{\mathrm{iid}}{\sim}}
\newcommand{\xrightarrowd}{\xrightarrow{d}}

\lstdefinestyle{stata}{
  language={}, basicstyle=\footnotesize\ttfamily,
  backgroundcolor=\color{gray!8}, frame=single, framesep=4pt,
  breaklines=true, keepspaces=true,
  commentstyle=\color{gray!60},
  morecomment=[l]{*}, morecomment=[l]{//},
  keywordstyle=\color{blue!80}\bfseries,
  morekeywords={ssc,use,gen,replace,keep,drop,tab,sum,reg,csdid,
    estat,event_plot,xtset,quietly,qui,di,list,clear,set,egen,
    bsample,forvalues,foreach,local,global,save,collapse,bys,
    duplicates,sort,predict,append,merge,describe},
  numbers=left,numberstyle=\tiny\color{gray},
  stepnumber=1,numbersep=6pt,showstringspaces=false,tabsize=4
}

\begin{document}

\begin{center}
  {\LARGE\bfseries Beyond Parallel Trends in Staggered 
   Difference-in-Differences: 
   Identification under Higher-Order Parallelism\par}

  \vspace{0.6cm}
  {\large Zecharias Anteneh, Ph.D$^{\dagger}$\par}
  \vspace{0.2em}
  {\normalsize Centre for Health Economics, University of York\par}
  \vspace{0.2em}
  {\normalsize \texttt{zecharias.anteneh@york.ac.uk}\par}
  \vspace{0.2em}
  {\normalsize\itshape June 2026\par}
\end{center}

\footnotetext[1]{I am extremely grateful to Jonathan Roth for detailed comments that improved the paper. I thank Scott Cunningham for preparing and making publicly available the cleaned Medicaid expansion panel used in the empirical application; the underlying microdata are from the American Community Survey. All errors are
my own.}

\vspace{0.4cm}
  
  \begin{abstract}\noindent  
  In difference-in-differences designs, the parallel trends assumption requires that the outcome gap between treated and control units would have remained flat absent treatment. Pre-treatment event studies frequently reject this flat-gap requirement. Existing responses include parametric trend controls and bounds on the treatment effect under assumptions about the magnitude of the violation. This paper shows that point identification of cohort-specific and aggregate treatment effects in staggered designs remains achievable under strictly weaker assumptions. I replace the flat-gap requirement with a hierarchy of higher-order conditions, Parallel[$p$], embed this framework in the group-time average treatment effect structure of \citet{callaway2021difference}, and prove an aggregation theorem for the case where different cohorts are identified under different feasible polynomial orders, a challenge unique to staggered designs that has not been previously addressed (Theorem~\ref{thm:aggregation}). A sequential order-selection procedure guides applied practice. Monte Carlo evidence confirms that post-selection bootstrap coverage remains near-nominal and that inference is robust to realistic serial correlation. Applied to Medicaid expansion data, the method yields point estimates resting on an assumption the pre-treatment data do not reject, in contrast to the flat-gap requirement which those same data decisively reject.

\medskip
\noindent\textbf{Keywords:} difference-in-differences, staggered
adoption, parallel trends, higher-order parallelism, group-time
ATT, Medicaid expansion.

\noindent\textbf{JEL codes:} C21, C23.
\end{abstract}

\section{Introduction}
\label{sec:intro}

\indent Difference-in-differences (DiD) is among the most widely used designs for program and policy evaluation. In staggered adoption settings, recent work has shown that the conventional two-way fixed effects regression can deliver misleading estimates when treatment effects vary across cohorts, and has proposed alternative estimators that recover interpretable average treatment effects \citep{callaway2021difference,sun2021estimating,goodman2021difference,de2020two,borusyak2024revisiting}. All of these frameworks, however, rest on the same core identifying assumption: \emph{parallel trends}.

Parallel trends requires that, absent treatment, the average outcome gap between treated and control units would have remained constant over time. In practice, pre-treatment event studies frequently show systematic trends that are difficult to reconcile with this requirement, especially when pre-periods are long. When this happens, treatment effect estimates are biased, and the magnitude of that bias is generally unknown. The common response is to test for pre-trends and proceed only if the test passes, but this practice is itself problematic: pre-trends tests are often underpowered, and conditioning the analysis on having passed one distorts subsequent inference \citep{roth2022pretest}. A researcher who finds a trending pre-period is left without a clear way forward.

This paper asks: when parallel trends fails in a staggered adoption design, can we still point-identify average treatment effects without discarding some cohorts or bounding the effect under assumptions about the magnitude of the violation? I propose a middle path. Standard parallel trends implicitly assumes that whatever shape the pre-treatment gap takes, it would have become flat after treatment. I instead use the information in the pre-treatment trajectory: when the gap evolves smoothly, that evolution is informative about what it would have done absent treatment. I replace Parallel[1] with a hierarchy of higher-order assumptions, Parallel[$p$], and project the pre-treatment trajectory forward as the counterfactual. Where \citet{rambachan2023more} trace out an identified set indexed by the strength of a smoothness restriction, I instead select an order from the pre-treatment data and report a point estimate under that order.

Formally, Parallel[$p$] requires only that the $p$-th order time difference of the untreated outcome gap is common across groups. Parallel[1] is the standard assumption: the gap must be flat. Parallel[2] allows the gap to trend linearly; Parallel[3] allows a quadratic trajectory; higher orders allow richer curvature. Each step up in $p$ is strictly weaker than the previous one and remains sufficient for point identification of group-time average treatment effects. When a cohort has more than $p$ pre-treatment periods, Parallel[$p$] generates testable implications about the pre-treatment gap that can be assessed with observable data.

These implications do not, however, eliminate extrapolation. Every DiD estimator relies on an extrapolation assumption: Parallel[1] extrapolates a flat gap into the post-treatment period, while Parallel[$p$] extrapolates the pre-existing polynomial trajectory. Neither is testable in the post-treatment period. The advantage of Parallel[$p$] is that its key implication, that the pre-treatment gap lies on a low-degree polynomial, is testable with observed data, whereas the flatness required by Parallel[1] is often visibly violated. Parallel[$p$] thus replaces an assumption the data may reject with one they do not, recovering point identification where standard methods cannot proceed.

The paper makes three contributions. First, I formally identify $\ATTgt$ under Parallel[$p$] for arbitrary $p \geq 1$ in the staggered adoption framework of \citet{callaway2021difference}, using the polynomial structure of the pre-treatment gap to construct a cohort-specific counterfactual in each post-treatment period.

Second, I address a problem that arises only in staggered settings. Because treatment timing varies, cohorts differ in how many pre-treatment periods they have, and therefore in the maximum order each can support: a cohort treated two periods into the panel can support only Parallel[1], while one treated ten periods in can support up to Parallel[9]. The natural question is how to combine cohorts that are identified under different orders into a single summary effect. Theorem~\ref{thm:aggregation} resolves this, showing that coherent weighted averages of group-time ATTs remain identified even when each cohort $g$ is identified under its own order $p_g$. To my knowledge, this is the first aggregation result for staggered DiD under cohort-heterogeneous orders.

Third, I develop a sequential order-selection procedure tailored to staggered settings, provide Monte Carlo evidence on finite-sample performance and post-selection inference, and implement the method in a companion Stata command \texttt{anddp},
which produces the aggregate DD[$p$] ATT with cluster bootstrap inference. The selection procedure responds directly to the pre-testing critique: rather than testing the flat-gap null and conditioning on passing it, it selects the lowest order the pre-treatment data do not reject and estimates under that order. The Monte Carlo evidence shows that bootstrap confidence intervals retain near-nominal coverage after this selection step.

I apply the method to the staggered adoption of Medicaid expansion under the Affordable Care Act, using state-level insurance coverage data. Expanding and non-expanding states were on visibly diverging insurance trajectories before any expansion took effect, and a joint pre-trends test rejects flatness decisively. Standard \textit{DiD}, which assumes the gap would have stayed flat, is therefore hard to defend here. The proposed estimator, DD[$p$], nonetheless recovers point estimates under Parallel[$p$], an assumption the same pre-treatment data do not reject, and yields: expansion increased insurance coverage by roughly six percentage points.

The higher-order parallelism idea builds on \citet{mora2019alternative}, who develop the Parallel[$p$] hierarchy and prove identification in a two-group, multiple-period setting. This paper extends their framework to staggered adoption. A closely related paper is \citet{egami2023using}, who propose a GMM estimator combining first- and second-difference moment conditions for staggered designs, and whose online appendix sketches a $k$-th order generalization structurally similar to Parallel[$p$]. That generalization is not applied in their own empirical analysis, is not exposed in their companion software, and -- unlike the present paper -- does not develop a discrete order-selection procedure or a formal aggregation result for cohorts identified under different orders; their estimator instead forms a continuous GMM-efficient blend across orders rather than selecting and reporting a single order per cohort. The present paper develops the full Parallel[$p$] hierarchy, proves identification of ATT(g,t) under each order, and establishes an aggregation theorem (Theorem~\ref{thm:aggregation}) for cohorts identified under heterogeneous orders, together with a sequential selection procedure and supporting software.

Prior responses to parallel trends violations include the following. Sensitivity bounds \citep{rambachan2023more} restrict how far the post-treatment violation of parallel trends can depart from the pre-treatment trend, and report how the identified set for the treatment effect widens as that restriction is relaxed; the approach is rigorous and transparent, and rather than committing to a single counterfactual it traces out a range of estimates indexed by the strength of the assumption. The closest connection is to their smoothness restriction: assuming the post-treatment violation is exactly linear coincides with Parallel[2] in the present hierarchy. The approaches diverge in what they do with that restriction. They treat linearity as one end of a continuum and report how the identified set grows as it is relaxed, whereas I select an order from the data and report a point estimate under it. Pre-trend tests \citep{roth2022pretest} diagnose violations but do not provide a corrected estimator. A complementary approach is the non-inferiority framework of \citet{bilinski2026nothing}, which recasts pre-trend assessment as an equivalence test: rather than asking whether pre-trends are exactly zero, it asks whether deviations are small enough not to matter for the estimated effect. Their procedure operates under Parallel[1] and provides a decision rule for tolerating small violations. \citet{roth2023s} provide a comprehensive review of these and other responses to parallel trends violations.

A related practice in applied work is to include unit-specific linear time trends in TWFE regressions to absorb differential pre-trends \citep{angrist2009mostly}. \citet{dobkin2018economic} adopt a conceptually related strategy in a single-timing event study: they include a linear trend in event time alongside a saturated set of post-treatment dummies. Because the post-treatment dummies absorb all post-treatment variation, the slope coefficient is identified from pre-treatment data only. This is formally equivalent to the $p=2$ case of the present framework applied to a single cohort, where no aggregation is required. Dobkin's linear extrapolation is therefore a single one-step relaxation of flat parallel trends. DD[$p$], the estimator I develop under Parallel[$p$], generalises this in three ways: it extends to arbitrary order $p \geq 1$ with a sequential procedure to select the order the data support; it handles staggered adoption with multiple cohorts; and it provides formal aggregation across cohorts identified under different feasible orders, together with asymptotic inference (see Appendix~\ref{app:comparison}).

Other approaches to parallel-trends failures include synthetic control methods \citep{abadie2010synthetic,xu2017generalized,arkhangelsky2021synthetic,ben2022synthetic}, which construct a weighted comparison group that matches treated units' pre-treatment trajectories and work well with long pre-treatment series and relatively few treated units. Triple-differences designs \citep{strezhnev2023decomposing,ortiz2025better} require a placebo stratum known to be unaffected by treatment. Change-in-changes \citep{athey2006identification} relies on rank preservation. Partial identification approaches \citep{manski2018right} deliver bounds on treatment effects. Each of these methods is suited to a different empirical context; the present approach targets the common setting where the pre-treatment gap evolves smoothly and a polynomial counterfactual is credible.

The remainder of the paper is organised as follows. Section~\ref{sec:setup} establishes the framework. Section~\ref{sec:hotpt} introduces the hierarchy of higher-order parallelism assumptions. Section~\ref{sec:identification} derives the identification results, including the aggregation theorem. Section~\ref{sec:est} presents estimation and inference. Section~\ref{sec:orderselection} develops the order-selection procedure. Section~\ref{sec:simulation} reports Monte Carlo evidence. Section~\ref{sec:empirical} presents the Medicaid expansion application. Section~\ref{sec:conc} concludes.

\section{Setup}
\label{sec:setup}

\subsection{Panel, Treatment, and Notation}

The notation and staggered-adoption setup follow \citet{callaway2021difference}. Consider a balanced panel of $N$ units over $T$ periods $t = 1,\ldots,T$. Unit $i$ has adoption date $G_i \in \{g_1,\ldots,g_K,\infty\}$. $G_i = g$ means unit $i$ is first treated in period $g$; $G_i = \infty$ means never treated. Treatment is absorbing: once a unit is treated it remains treated in all subsequent periods, so a unit's adoption date $G_i$ fully summarises its treatment path. Units with $G_i = g$ form \emph{cohort} $g$. Let $N_g$ be the size of cohort $g$ and $N_\infty$ the number of never-treated units, with $N = \sum_g N_g + N_\infty$.

Asymptotics are large-$N$ with $T$ fixed. The cohort fractions $\pi_g = N_g/N$ are held fixed as $N$ grows, which is achieved by requiring each $N_g$ to grow proportionally with $N$, a standard large-sample device that prevents any cohort from vanishing or dominating in the limit.

The number of pre-treatment periods for cohort $g$ is:
\begin{equation}
  m_g = g - t_{\min}, \label{eq:mg}
\end{equation}
where $t_{\min}$ is the first observed period.

\subsection{Potential Outcomes}

For each unit $i$ and period $t$, let $Y_{it}(g)$ denote the potential outcome if first treated in period $g$, and
$Y_{it}(\infty)$ the never-treated potential outcome \citep{neyman1923application,rubin1974estimating}.
The observed outcome is:
\begin{equation}
  Y_{it} = \begin{cases}
    Y_{it}(\infty) & t < G_i \\
    Y_{it}(G_i)   & t \geq G_i.
  \end{cases}
\end{equation}

Before treatment, observed outcomes equal untreated potential outcomes, making pre-treatment data informative about the
untreated trajectory.

\begin{assumption}[No Anticipation]\label{ass:na}
For all $i$, $g$, and $t < g$: $Y_{it}(g) = Y_{it}(\infty)$.
\end{assumption}

Units do not alter behaviour in anticipation of future treatment, so their pre-treatment observed outcomes equal their untreated
potential outcomes.

\begin{assumption}[Overlap]\label{ass:ov}
For each cohort $g$, the probability of belonging to that cohort is strictly positive and strictly less than one: 
$0 < \Pr(G_i = g) < 1$. \end{assumption}

\subsection{Target Parameters}

The group-time average treatment effect is:
\begin{equation}
  \ATTgt = \E[Y_{it}(g) - Y_{it}(\infty) \mid G_i=g],
  \quad t \geq g. \label{eq:attgt}
\end{equation}
The expectation averages over all units $i$ belonging to cohort $g$, comparing their actual post-treatment outcome to what they
would have experienced had treatment never occurred.

A scalar summary is the weighted aggregate:
\begin{equation}
  \theta = \sum_{g}\sum_{t \geq g} w_{g,t}\,\ATTgt, \label{eq:theta}
\end{equation}
for non-negative weights $w_{g,t}$ summing to one.

\section{Higher-Order Parallel Trends}
\label{sec:hotpt}

\subsection{Standard Parallel Trends}

\begin{assumption}[Parallel{[1]}]\label{ass:pt1}
For all cohorts $g$ and periods $t$:
$\E[\difd Y_{it}(\infty)\mid G_i=g]
= \E[\difd Y_{it}(\infty)\mid G_i=\infty]$,
where $\difd Y_{it} = Y_{it} - Y_{i,t-1}$.
\end{assumption}

Standard parallel trends says that, absent treatment, the average untreated outcome would have changed by the same amount each period for every group, treated cohorts and never-treated alike. Equivalently, the gap between any cohort and the never-treated group would have stayed constant over time. This has a testable pre-treatment implication: the pre-treatment gap should be flat. When the event-study plot shows a visible pre-treatment trend, that implication fails, and Parallel[$1$] is implausible.

\subsection{Higher-Order Differences}

The $p$-th order difference operator is defined recursively:
$\difp{1}Y_{it} = Y_{it} - Y_{i,t-1}$ and
$\difp{p}Y_{it} = \difp{1}(\difp{p-1}Y_{it})$ for $p \geq 2$.
Written out:
\begin{equation}
  \difp{p}Y_{it}
  = \sum_{k=0}^{p}(-1)^k\binom{p}{k}Y_{i,t-k}. \label{eq:pdiff}
\end{equation}
$\difp{1}$ measures the year-on-year change; $\difp{2}$ measures how that change is itself changing; $\difp{3}$ measures the rate
of change of $\difp{2}$, and so on.

To build intuition, suppose the gap between treated and control groups rises by 0.01 per year before treatment. Parallel[1] requires the gap to be constant, so a steadily rising gap is inconsistent with it. Parallel[2] requires only that this 0.01 annual change is common to the treated and never-treated groups, a weaker restriction that the pre-treatment data may well support. 

\begin{assumption}[Parallel{[$p$]}]\label{ass:ptp}
For a given integer $p \geq 1$, all cohorts $g$, and all periods $t$:
\[
  \E[\difp{p}Y_{it}(\infty)\mid G_i=g]
  = \E[\difp{p}Y_{it}(\infty)\mid G_i=\infty].
\]
\end{assumption}

Parallel[1] is the special case $p=1$. Parallel[2] allows different levels and slopes across groups, requiring only that the acceleration is common. Each step up in $p$ is a strictly weaker assumption and requires one additional pre-treatment period to test.

\begin{remark}[Parallel{[$p$]} does not imply Parallel{[$p-1$]}]
  The hierarchy is nested in the sense that data satisfying
  Parallel[$p-1$] also satisfy Parallel[$p$] (a flat function
  is a special case of a linear function).
  But the converse does not hold: data with a stable linear
  pre-trend satisfy Parallel[2] but not Parallel[1].
  The appropriate order is determined by the data-generating
  process and estimated via the sequential test in
  Section~\ref{sec:orderselection}.
\end{remark}

\section{Identification in Staggered Settings}
\label{sec:identification}
\subsection{Pre-Treatment Gap and Polynomial Structure}

Define the observed gap between cohort $g$ and never-treated
units in period $t$:
\begin{equation}
  \gamma_{g,t}
  = \E[Y_{it}\mid G_i=g] - \E[Y_{it}\mid G_i=\infty].
  \label{eq:gap}
\end{equation}
Under Assumption~\ref{ass:na}, this equals the untreated
potential outcome gap in the pre-treatment period.

\begin{lemma}[Polynomial Gap Structure]
  \label{lem:poly}
  Under Assumptions~\ref{ass:na} and \ref{ass:ptp},
  for all pre-treatment periods $t < g$ with $t \geq t_{\min}+p$:
  $\difp{p}\gamma_{g,t} = 0$.
  Equivalently, $\gamma_{g,t}$ is a polynomial of degree $p-1$
  in $t$ for all $t < g$:
  \begin{equation}
    \gamma_{g,t}
    = c_{g,0} + c_{g,1}t + \cdots + c_{g,p-1}t^{p-1},
    \quad t < g. \label{eq:polygap}
  \end{equation}
  The cohort-specific constants $c_{g,0},\ldots,c_{g,p-1}$ are
  identified from $p$ pre-treatment gap observations.
\end{lemma}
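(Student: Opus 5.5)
The plan is to work directly from the definition of the gap \eqref{eq:gap} and the explicit formula \eqref{eq:pdiff} for $\difp{p}$, using linearity of expectation throughout.

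\textbf{Step 1 (the gap equals the untreated gap before treatment).} Fix a cohort $g$ and a period $s<g$. For $G_i=g$, the observed outcome is $Y_{is}=Y_{is}(\infty)$ by the definition of observed outcomes. For $G_i=\infty$ this holds in every period. Assumption~\ref{ass:na} makes the identification explicit and ensures it is $Y_{is}(\infty)$, not some anticipatory outcome, that enters. Hence $\gamma_{g,s}=\E[Y_{is}(\infty)\mid G_i=g]-\E[Y_{is}(\infty)\mid G_i=\infty]$ for all $s<g$.

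\textbf{Step 2 (the $p$-th difference of the gap vanishes).} Take $t<g$ with $t\ge t_{\min}+p$. Then every lag $t-k$, $k=0,\dots,p$, is observed and strictly precedes $g$. By \eqref{eq:pdiff} and linearity of conditional expectation,
\[
\difp{p}\gamma_{g,t}=\sum_{k=0}^{p}(-1)^k\binom{p}{k}\gamma_{g,t-k}
=\E[\difp{p}Y_{it}(\infty)\mid G_i=g]-\E[\difp{p}Y_{it}(\infty)\mid G_i=\infty],
\]
and this is zero by Assumption~\ref{ass:ptp}.

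\textbf{Step 3 (equivalence with the polynomial form).} This is the only step with real content, and I expect it to be the main obstacle. The goal is to show that a sequence $(\gamma_{g,t})$ on the consecutive integers $t_{\min},\dots,g-1$ satisfies $\difp{p}\gamma_{g,t}=0$ for all admissible $t$ if and only if it coincides with a polynomial of degree at most $p-1$. I would argue by induction on $p$.

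For $p=1$, a zero first difference means $\gamma_{g,t}$ is constant.

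For the inductive step, suppose $\difp{p}\gamma=\difp{1}(\difp{p-1}\gamma)=0$. Then $\difp{p-1}\gamma_{g,t}$ is a constant $a$. Next observe that $\difp{p-1}t^{p-1}=(p-1)!$, while lower powers of $t$ are annihilated by $\difp{p-1}$. This follows from the fact that $\difp{1}$ lowers polynomial degree by exactly one, with leading coefficient multiplied by the degree. Consequently $\gamma_{g,t}-\frac{a}{(p-1)!}t^{p-1}$ has zero $(p-1)$-th difference, and the induction hypothesis gives that it is a polynomial of degree at most $p-2$.

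The converse direction follows from the same degree-lowering fact. The statement should be read as "degree at most $p-1$", which I would note explicitly. When $m_g<p$ the condition in Step 2 is vacuous, and the claim should be understood as applying whenever $m_g\ge p$.

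\textbf{Step 4 (identification of the constants).} Evaluating \eqref{eq:polygap} at any $p$ distinct pre-treatment periods $t_1,\dots,t_p<g$ gives a linear system $V\,c_g=(\gamma_{g,t_1},\dots,\gamma_{g,t_p})'$, where $V$ is the Vandermonde matrix with rows $(1,t_j,\dots,t_j^{p-1})$. Since the $t_j$ are distinct, $V$ is invertible. Each $\gamma_{g,t_j}$ is a difference of population means of observed outcomes, so it is identified, and Assumption~\ref{ass:ov} guarantees that both conditioning events have positive probability. Therefore $c_g=V^{-1}\gamma$ is identified. This requires $m_g\ge p$.
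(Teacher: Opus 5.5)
Your proposal is correct. Steps 1, 2 and 4 match the paper's proof: the pre-treatment gap equals the untreated-outcome gap, linearity of $\Delta^p$ together with Parallel[$p$] makes its $p$-th difference vanish, and a $p\times p$ Vandermonde system pins down the coefficients.

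The genuine difference is the kernel characterization in Step 3. The paper works with sequences on $\mathbb{Z}$ and peels off the \emph{first} difference. From $\Delta^{p-1}(\Delta\gamma)=0$ it gets $\Delta\gamma\in\mathcal{P}_{p-2}$ by induction, and then asserts that this ``implies'' $\gamma\in\mathcal{P}_{p-1}$. That last step tacitly uses that every polynomial of degree $\le p-2$ has a polynomial antidifference of degree $\le p-1$.

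You instead peel off the \emph{last} difference. You note that $\Delta^{p-1}\gamma$ is a constant $a$, subtract $\frac{a}{(p-1)!}t^{p-1}$, and drop to order $p-1$ via the explicit identity $\Delta^{p-1}t^{p-1}=(p-1)!$. This buys two things:
\begin{itemize}
\item The degree-lowering fact you invoke is exactly what the paper's unstated summation step needs, so your induction is self-contained.
\item You argue directly on the finite window $t_{\min},\dots,g-1$, where $\Delta^p\gamma_{g,t}=0$ has actually been established. The observed gap does not have a vanishing $p$-th difference on all of $\mathbb{Z}$, since it picks up the treatment effect for $t\ge g$. The paper's $\mathbb{Z}$ framing therefore needs a tacit restriction to the pre-treatment window.
\end{itemize}

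The paper's version is shorter. It also states the result as the standard fact $\ker\Delta^p=\mathcal{P}_{p-1}$, which it reuses in Proposition~\ref{prop:cf} for the untreated gap, where the zero $p$-th difference does hold in every period.

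You also spell out several points the paper leaves implicit: the converse direction, the ``degree at most $p-1$'' reading, and the roles of $m_g\ge p$ and overlap in identifying the constants.
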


\begin{proof}
Under Assumption~\ref{ass:na}, for $t < g$:
$\E[Y_{it}\mid G_i=g] = \E[Y_{it}(\infty)\mid G_i=g]$.
Therefore:
\[
  \gamma_{g,t}
  = \E[Y_{it}(\infty)\mid G_i=g]
  - \E[Y_{it}(\infty)\mid G_i=\infty].
\]
Since $\difp{p}$ is a linear operator:
\begin{align*}
  \difp{p}\gamma_{g,t}
  &= \E[\difp{p}Y_{it}(\infty)\mid G_i=g]
   - \E[\difp{p}Y_{it}(\infty)\mid G_i=\infty] = 0,
\end{align*}
where the equality uses Assumption~\ref{ass:ptp}.
A sequence on $\mathbb{Z}$ with identically zero $p$-th difference is a polynomial of degree at most $p-1$, the discrete analogue of the fact that a function whose $p$-th derivative is identically zero is a polynomial of degree $p-1$.\footnote{This is a standard result in finite-difference calculus; see, e.g., Jordan (1965, \textit{Calculus of Finite Differences}) for a classical treatment.}
Formally, let $\mathcal{P}_{p-1}$ denote the space of polynomials of degree $\leq p-1$ on $\mathbb{Z}$.
The kernel of $\difp{p}$ on $\mathbb{Z}$-sequences is exactly
$\mathcal{P}_{p-1}$ (this follows by induction: $\difp{1}\gamma=0$ iff $\gamma$ is constant; if $\difp{p}\gamma=0$ then $\difp{p-1}(\difd\gamma)=0$, so $\difd\gamma\in\mathcal{P}_{p-2}$,
which implies $\gamma\in\mathcal{P}_{p-1}$).
The $p$ coefficients $c_{g,0},\ldots,c_{g,p-1}$ are uniquely
determined by any $p$ distinct values of $\gamma_{g,t}$ from
the pre-treatment period, obtained by solving the resulting linear system or, when $m_g > p$, by OLS.
\end{proof}

Lemma~\ref{lem:poly} makes the identifying content of Parallel[$p$] concrete and testable: the pre-treatment gap must lie on a polynomial of degree $p-1$. For $p=1$: flat (constant). For $p=2$: straight line.
For $p=3$: quadratic curve. The residuals from fitting this polynomial to the pre-treatment data are the testable implications used in
Section~\ref{sec:orderselection}.

\subsection{Counterfactual Identification}

The counterfactual gap, what the gap would have been in the post-treatment period absent treatment, is:
\begin{equation}
  \gamma_{g,t}(0)
  = \E[Y_{it}(\infty)\mid G_i=g]
  - \E[Y_{it}(\infty)\mid G_i=\infty],
  \quad t \geq g. \label{eq:cfgap}
\end{equation}
This is unobservable for $t \geq g$ because treated units observed outcomes include the treatment effect.

\begin{proposition}[Counterfactual Identification]
  \label{prop:cf}
  Under Assumptions~\ref{ass:na} and \ref{ass:ptp}, with
  $m_g \geq p$:
  \begin{enumerate}[label=(\roman*)]
    \item $\gamma_{g,t}(0)$ is a polynomial of degree $p-1$ in $t$
      for \emph{all} periods $t$, including post-treatment.
    \item This polynomial is uniquely identified by the $p$ most
      recent pre-treatment observations
      $\{\gamma_{g,g-k}\}_{k=1}^{p}$, which are observable under
      Assumption~\ref{ass:na}
      (see Remark~\ref{rem:ols_vs_interp} for the finite-sample
      estimator, which uses all $m_g$ pre-treatment observations
      for efficiency).
    \item Evaluating the polynomial at any $t \geq g$ gives
      $\gamma_{g,t}(0)$.
  \end{enumerate}
  In particular:
  for $p=1$, $\gamma_{g,t}(0) = \gamma_{g,g-1}$ (flat at last
  pre-period value, the standard DiD counterfactual);
  for $p=2$,
  $\gamma_{g,t}(0) = (t-g+2)\gamma_{g,g-1}-(t-g+1)\gamma_{g,g-2}$
  (linear extrapolation of the pre-existing trend).
\end{proposition}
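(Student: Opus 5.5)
The plan is to work with the \emph{counterfactual} gap sequence $t \mapsto \gamma_{g,t}(0)$ defined for \emph{all} $t$ by the right-hand side of \eqref{eq:cfgap}, not only for $t \ge g$. Written this way it is a difference of conditional means of untreated potential outcomes, which exist in every period whether or not the unit is treated. For $t<g$, Assumption~\ref{ass:na} gives $\gamma_{g,t}(0)=\gamma_{g,t}$, so the observed pre-treatment gap and this sequence agree on $t<g$.

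For part (i), I would repeat the argument of Lemma~\ref{lem:poly} on the whole panel rather than only the pre-period. Since $\difp{p}$ is linear and Assumption~\ref{ass:ptp} holds for all $t$ (not only $t<g$), we get $\difp{p}\gamma_{g,t}(0)=0$ for every $t\ge t_{\min}+p$. Then I invoke the kernel characterization already proved in Lemma~\ref{lem:poly}: a sequence with identically zero $p$-th difference on a run of consecutive integers is a polynomial of degree at most $p-1$. This gives one polynomial $P_g$ with $\gamma_{g,t}(0)=P_g(t)$ for all $t\in\{t_{\min},\dots,T\}$.

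The step needing the most care is that this is a \emph{single} polynomial across the treatment date, with no break at $g$. The point is that the recursion $\difp{p}\gamma_{g,t}(0)=0$ is equivalent, by \eqref{eq:pdiff}, to
\[
\gamma_{g,t}(0)=-\sum_{k=1}^{p}(-1)^k\binom{p}{k}\gamma_{g,t-k}(0).
\]
So every value from $t_{\min}+p$ onward is determined recursively by the preceding $p$ values. Hence the sequence is pinned down by any $p$ consecutive values, and a polynomial of degree $\le p-1$ matching them satisfies the same recursion, so it coincides with the sequence by induction. This handles the obstacle directly, and the same backward/forward recursion also covers the boundary periods.

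For part (ii), take the $p$ consecutive values $\gamma_{g,g-1},\dots,\gamma_{g,g-p}$. These require $m_g\ge p$ so that $g-p\ge t_{\min}$, and they are observed by Assumption~\ref{ass:na}. Uniqueness follows from invertibility of the Vandermonde matrix at $p$ distinct nodes: two polynomials of degree $\le p-1$ agreeing at $p$ points are equal. Part (iii) is then immediate from (i) and (ii): $\gamma_{g,t}(0)=P_g(t)$ for $t\ge g$, and $P_g$ is identified.

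Finally, I would verify the special cases by Lagrange interpolation. For $p=1$ the polynomial is the constant $\gamma_{g,g-1}$. For $p=2$, the line through $(g-2,\gamma_{g,g-2})$ and $(g-1,\gamma_{g,g-1})$ evaluated at $t$ is
\[
\gamma_{g,g-1}+(t-g+1)(\gamma_{g,g-1}-\gamma_{g,g-2}),
\]
which rearranges to the stated formula.
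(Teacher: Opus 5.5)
Your proposal is correct and follows the paper's proof essentially step for step. The shared steps are: $\difp{p}\gamma_{g,t}(0)=0$ for all $t$ by linearity and Parallel[$p$]; the kernel characterization from Lemma~\ref{lem:poly}; Vandermonde uniqueness from the $p$ most recent pre-period gaps, which are observable by Assumption~\ref{ass:na}; and direct evaluation for the $p=1,2$ cases. Your forward/backward recursion on the finite window $\{t_{\min},\dots,T\}$ is a slightly more careful way of ruling out a break at $g$ than the paper's appeal to the kernel of $\difp{p}$ on $\mathbb{Z}$-sequences, but it rests on the same idea.
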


\begin{proof}
  Part (i): Assumption~\ref{ass:ptp} states $\difp{p}\gamma_{g,t}(0) = 0$ for all $t$, not just pre-treatment periods. By the same argument as Lemma~\ref{lem:poly}, $\gamma_{g,t}(0) \in \mathcal{P}_{p-1}$ for all $t$.

  Part (ii): A polynomial of degree $p-1$ is uniquely determined by
  $p$ values. Under Assumption~\ref{ass:na}, $\gamma_{g,g-k} = \E[Y_{i,g-k}(\infty)\mid G_i=g] - \E[Y_{i,g-k}(\infty)\mid G_i=\infty]$ for $k=1,\ldots,p$, which equals $\gamma_{g,g-k}(0)$ (since no treatment has occurred). These $p$ values uniquely pin down the polynomial.\footnote{The coefficient vector $\mathbf{c}_g$ is uniquely determined because the $p \times p$ Vandermonde matrix formed from $p$ distinct integer time indices has full column rank; distinct values of $t$ guarantee non-vanishing Vandermonde determinant.}

  Part (iii): Evaluate the identified polynomial at $t \geq g$.
  Since $\gamma_{g,t}(0)$ is a polynomial everywhere and we have
  identified it from pre-treatment data, its post-treatment values
  are also identified.

  For $p=1$: the unique degree-0 polynomial through one point is
  the constant $\gamma_{g,g-1}(0) = \gamma_{g,g-1}$.
  For $p=2$: the unique degree-1 polynomial through two points
  $\{(g-2, \gamma_{g,g-2}), (g-1, \gamma_{g,g-1})\}$ has slope
  $\gamma_{g,g-1}-\gamma_{g,g-2}$ and evaluates at $t$ to
  $\gamma_{g,g-2} + (t-g+2)(\gamma_{g,g-1}-\gamma_{g,g-2})
  = (t-g+2)\gamma_{g,g-1}-(t-g+1)\gamma_{g,g-2}$.
\end{proof}

\subsection{Group-Time ATT Identification}

\begin{theorem}[Identification under Parallel{[$p$]}]
  \label{thm:identification}
  Under Assumptions~\ref{ass:na}, \ref{ass:ov},
  and \ref{ass:ptp} (for the relevant order $p$),
  with $m_g \geq p$,
  for any post-treatment period $t \geq g$:
  \begin{equation}
    \ATTgt = \gamma_{g,t} - \gamma_{g,t}(0), \label{eq:attid}
  \end{equation}
  where $\gamma_{g,t}$ is the observed post-treatment gap and
  $\gamma_{g,t}(0)$ is the polynomial counterfactual from
  Proposition~\ref{prop:cf}.
  Both are identified from observable data under
  Assumption~\ref{ass:ov}.
\end{theorem}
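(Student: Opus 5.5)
The argument is a direct decomposition of the observed post-treatment gap, followed by appeals to Proposition~\ref{prop:cf} and Assumption~\ref{ass:ov} for identification. First, for $t \geq g$, the observed outcome of a cohort-$g$ unit is $Y_{it}=Y_{it}(g)$. The observed outcome of a never-treated unit is $Y_{it}=Y_{it}(\infty)$, since $G_i=\infty>t$. Substituting into the definition \eqref{eq:gap} gives
\[
\gamma_{g,t}=\E[Y_{it}(g)\mid G_i=g]-\E[Y_{it}(\infty)\mid G_i=\infty].
\]
I will add and subtract $\E[Y_{it}(\infty)\mid G_i=g]$. The first two terms then form $\ATTgt$ by \eqref{eq:attgt}, and the remaining two are exactly $\gamma_{g,t}(0)$ from \eqref{eq:cfgap}. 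Rearranging yields \eqref{eq:attid}. This step is purely algebraic, provided the conditional expectations exist, which I take as a maintained finite-moment condition.

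Second, I will argue that each term on the right-hand side is identified. The term $\gamma_{g,t}$ is a difference of conditional means of observed outcomes. Assumption~\ref{ass:ov} guarantees $\Pr(G_i=g)>0$. Together with the presence of never-treated units, which the definition of $\gamma_{g,t}$ presupposes, this makes both conditioning events non-null, so the conditional expectations are functionals of the joint distribution of $(Y_{it},G_i)$.

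For $\gamma_{g,t}(0)$, I will invoke Proposition~\ref{prop:cf}, which requires Assumptions~\ref{ass:na} and \ref{ass:ptp} together with $m_g\ge p$. That result represents $\gamma_{g,t}(0)$ as the degree-$(p-1)$ polynomial through $\{(g-k,\gamma_{g,g-k})\}_{k=1}^p$. I plan to state this explicitly as the Lagrange interpolation form
\[
\gamma_{g,t}(0)=\sum_{k=1}^p \Big(\prod_{j\neq k}\tfrac{t-(g-j)}{(g-j)-(g-k)}\Big)\gamma_{g,g-k},
\]
which is a fixed linear combination of identified pre-treatment gaps. The condition $m_g\ge p$ ensures all the indices $g-k$ lie in the observed window, since $g-p\ge t_{\min}$.

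The only step needing care is the identification of $\gamma_{g,t}(0)$ in the second paragraph. Here Parallel[$p$] must hold for all $t$, including post-treatment periods, so that the pre-period polynomial continues beyond $g$. Proposition~\ref{prop:cf}(i) already supplies this, so the obstacle reduces to checking that its hypotheses match those of the theorem, and they do.
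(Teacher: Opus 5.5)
Your proposal is correct and follows the paper's proof: you use the same add-and-subtract of $\E[Y_{it}(\infty)\mid G_i=g]$ together with $Y_{it}=Y_{it}(\infty)$ for never-treated units, and you delegate identification of $\gamma_{g,t}(0)$ to Proposition~\ref{prop:cf}, while being somewhat more explicit than the paper about overlap, the need for never-treated units, and $g-p\geq t_{\min}$. One slip to fix: the Lagrange denominator should be $(g-k)-(g-j)$, not $(g-j)-(g-k)$, because as written each basis polynomial is off by a factor $(-1)^{p-1}$ (for $p=2$ you get exactly the negative of the paper's $(t-g+2)\gamma_{g,g-1}-(t-g+1)\gamma_{g,g-2}$), though this does not affect the argument, which only needs $\gamma_{g,t}(0)$ to be a fixed linear combination of identified pre-treatment gaps.
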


\begin{proof}
  Write:
  \begin{align*}
    \ATTgt
    &= \E[Y_{it}\mid G_i=g] - \E[Y_{it}(\infty)\mid G_i=g].
  \end{align*}
  By definition of $\gamma_{g,t}(0)$:
  $\E[Y_{it}(\infty)\mid G_i=g]
  = \E[Y_{it}(\infty)\mid G_i=\infty] + \gamma_{g,t}(0)$.
  Never-treated units are never treated, so
  $\E[Y_{it}(\infty)\mid G_i=\infty] = \E[Y_{it}\mid G_i=\infty]$,
  which is directly observable.
  Therefore:
  \begin{align*}
    \ATTgt
    &= \E[Y_{it}\mid G_i=g]
     - \E[Y_{it}\mid G_i=\infty]
     - \gamma_{g,t}(0) \\
    &= \gamma_{g,t} - \gamma_{g,t}(0). \qedhere
  \end{align*}
\end{proof}

\noindent
The treatment effect is the vertical distance between the observed
post-treatment gap and the polynomial counterfactual: the hollow
dots above the dashed line in Figure~\ref{fig:gap}.

\subsection{The Short Pre-Period Problem and Cohort-Heterogeneous
  Orders}

In a two-group setting, the number of pre-treatment periods is
fixed for the single treated cohort, and the researcher simply
chooses the largest $p$ the panel supports.
In staggered settings, $m_g = g - t_{\min}$ varies across cohorts.
Early-treated cohorts have small $m_g$; late-treated cohorts have
large $m_g$.

\begin{definition}[Feasible Set and Cohort-Specific Orders]
  \label{def:feasible}
  For target order $p$, the feasible cohort set is
  $\mathcal{F}(p) = \{g : m_g \geq p\}$.
  The cohort-specific applied order is
  $p_g \in \{1,\ldots,m_g-1\}$,
  the order assumed for cohort $g$.
\end{definition}

Orders up to $m_g$ are identified in population, since identification
requires only $m_g \geq p$, but $p_g = m_g$ exactly interpolates the
pre-treatment gaps and leaves no residual degree of freedom for
diagnostics; I therefore restrict attention to $p_g \leq m_g - 1$
throughout.

When $p_g < p$ for some cohorts, applying a uniform $p$ either
requires excluding those cohorts (changing the estimand) or
over-relaxing to a lower order than the data of later-treated
cohorts can support.
The following theorem resolves this by allowing heterogeneous
orders across cohorts.

\begin{theorem}[Aggregation under Cohort-Heterogeneous Orders]
  \label{thm:aggregation}
  Suppose for each cohort $g$,
  Assumptions~\ref{ass:na}--\ref{ass:ov} hold, and
  Assumption~\ref{ass:ptp} holds for cohort-specific order $p_g$,
  where $1 \leq p_g \leq m_g-1$.
  Let $\pgvec = (p_g)_g$.
  Then:
  \begin{enumerate}[label=(\roman*)]
    \item \emph{(Cohort identification)}
      $\ATTgt$ is identified for each $g$ and $t \geq g$ by
      Theorem~\ref{thm:identification} applied with order $p_g$.
    \item \emph{(Aggregate identification)}
      $\theta(\pgvec) = \sum_{g}\sum_{t\geq g}
      w_{g,t}\,\mathrm{ATT}^{(p_g)}(g,t)$ is identified
      for any non-negative weights summing to one.
    \item \emph{(Callaway--Sant'Anna as special case)}
      When $p_g = 1$ for all $g$ and the weights $w_{g,t}$ are
      chosen to match the \citet{callaway2021difference}
      aggregation scheme, $\theta(\pgvec)$ identifies the same
      population parameter as the Callaway--Sant'Anna aggregate ATT
      under standard parallel trends.
  \end{enumerate}
\end{theorem}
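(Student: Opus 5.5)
The plan is to reduce all three parts to Theorem~\ref{thm:identification}, applied cohort by cohort. The one point that needs care is that Assumption~\ref{ass:ptp} is now imposed at a cohort-specific order. So I would first restate what it means for cohort $g$ to satisfy Parallel[$p_g$]: the condition $\E[\difp{p_g}Y_{it}(\infty)\mid G_i=g]=\E[\difp{p_g}Y_{it}(\infty)\mid G_i=\infty]$ holds for all $t$. This is a statement about the pair (cohort $g$, never-treated) only. It involves no other cohort, because each $\gamma_{g,t}$ compares cohort $g$ with the common never-treated group alone. The orders $p_g$ therefore never have to be reconciled across cohorts, and that observation carries the whole argument.

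For part (i), fix $g$. The hypothesis $p_g\le m_g-1$ implies $m_g\ge p_g$, so Proposition~\ref{prop:cf} applies with $p=p_g$. By Lemma~\ref{lem:poly} and the Proposition, $\gamma_{g,t}(0)$ is the unique polynomial of degree $p_g-1$ through the observable pre-treatment gaps $\{\gamma_{g,g-k}\}_{k=1}^{p_g}$, evaluated at $t\ge g$. Assumption~\ref{ass:ov} guarantees that both conditional expectations defining $\gamma_{g,t}$ are well defined, since cohort $g$ and the never-treated group each have positive probability. Theorem~\ref{thm:identification} then gives $\ATTgt=\gamma_{g,t}-\gamma_{g,t}(0)$, which I write as $\mathrm{ATT}^{(p_g)}(g,t)$. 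I would add a short remark that this identified quantity equals the true $\ATTgt$ and does not depend on $p_g$, provided Parallel[$p_g$] holds for that cohort. The superscript only records which counterfactual formula is used.

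For part (ii), $\theta(\pgvec)$ is a finite linear combination, over finitely many $(g,t)$ with $t\le T$, of quantities each identified by (i). Since the weights are known or identified, $\theta(\pgvec)$ is a known functional of the observed distribution. This holds for the CS weights, which depend only on the cohort shares $\Pr(G_i=g)$, and those shares are identified.

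For part (iii), set $p_g=1$. Proposition~\ref{prop:cf} gives $\gamma_{g,t}(0)=\gamma_{g,g-1}$, so $\mathrm{ATT}^{(1)}(g,t)=\gamma_{g,t}-\gamma_{g,g-1}$. Expanding this yields $\E[Y_{it}-Y_{i,g-1}\mid G_i=g]-\E[Y_{it}-Y_{i,g-1}\mid G_i=\infty]$, which is exactly the unconditional Callaway--Sant'Anna estimand with never-treated controls and base period $g-1$. Applying identical weights then gives identical aggregates.

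I expect the main obstacle to be conceptual rather than computational. Part (i) must make clear that ``Assumption~\ref{ass:ptp} for cohort-specific order'' is a coherent per-cohort hypothesis, and that mixing orders across cohorts introduces no cross-cohort inconsistency. The decoupling observation in the first paragraph handles this.
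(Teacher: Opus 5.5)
Your proposal is correct and follows essentially the same route as the paper. It applies Theorem~\ref{thm:identification} cohort by cohort, justified by the same decoupling observation: each Parallel[$p_g$] restriction involves only cohort $g$ and the never-treated group. Part (ii) is then a finite weighted sum of identified quantities, and part (iii) uses the flat counterfactual $\gamma_{g,t}(0)=\gamma_{g,g-1}$. You are somewhat more explicit than the paper---checking $m_g\ge p_g$, requiring the weights themselves to be known or identified, and writing out the long difference that matches the Callaway--Sant'Anna estimand with base period $g-1$---but the argument is the same.
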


\begin{proof}
  Part (i) follows from Theorem~\ref{thm:identification} applied
  independently to each cohort under its own order $p_g$.
  Each cohort's counterfactual is constructed from that cohort's
  own pre-treatment gaps; although all cohorts share the same
  never-treated group as the comparison group, the Parallel[$p_g$]
  restriction for cohort $g$ is a population condition on cohort
  $g$'s potential outcomes and can be imposed independently of
  restrictions placed on other cohorts.
  Hence identification for one cohort does not require or restrict
  the identifying assumption of another.

  Part (ii): since each $\mathrm{ATT}^{(p_g)}(g,t)$ is identified
  (part~i), and a finite weighted average of identified quantities
  is identified, $\theta(\pgvec)$ is identified.
  Formally, write
  $\theta(\pgvec) = \boldsymbol{w}^\top \boldsymbol{a}$,
  where $\boldsymbol{a}$ collects all identified
  $\mathrm{ATT}^{(p_g)}(g,t)$ and $\boldsymbol{w}$ are the
  corresponding non-negative weights summing to one.
  Since each component of $\boldsymbol{a}$ is identified, so is
  $\theta$.

  Part (iii): when $p_g=1$ for all $g$,
  Proposition~\ref{prop:cf} gives
  $\gamma_{g,t}(0) = \gamma_{g,g-1}$ for all $t \geq g$.
  This is the flat counterfactual of standard parallel trends.
  When the weights $w_{g,t}$ are furthermore chosen to match the
  \citet{callaway2021difference} aggregation scheme, $\theta(\pgvec)$
  identifies the same population parameter as the Callaway--Sant'Anna
  aggregate ATT under standard parallel trends.
\end{proof}

\begin{remark}[Economic interpretation of the aggregate]
  The structural parameter is $\ATTgt$, defined in
  Section~\ref{sec:setup} without reference to any identifying
  order.
  The superscript $(p_g)$ on $\widehat{\mathrm{ATT}}^{(p_g)}(g,t)$
  denotes the estimator constructed under order $p_g$, not a
  different population quantity: the target is always $\ATTgt$.
  When different cohorts are identified under different orders
  $p_g$, the aggregate $\theta(\pgvec)$ remains interpretable
  because each cohort's $\ATTgt$ is a well-defined structural
  object, the average difference between actual and
  counterfactual outcomes for that cohort, regardless of how
  it is identified.
  The identification strategy does not change the parameter;
  it only changes the assumption under which it is recovered.
  Transparency requires reporting which order was applied to each
  cohort; the three reporting strategies in Section~\ref{sec:est}
  formalise this.
\end{remark}

\begin{remark}[Relation to prior work]
  \citet{mora2019alternative} prove the identification result in
  Theorem~\ref{thm:identification} for the two-group case under
  Parallel[$p$].
  \citet{callaway2021difference} provide aggregation of group-time
  ATTs under standard parallel trends ($p_g = 1$ for all cohorts).
 \citet{egami2023using} propose, but do not operationalize or apply, a $k$-th order generalization for staggered designs in their online appendix.
  Theorem~\ref{thm:aggregation} extends these results by allowing
  cohort-specific orders $p_g \in \{1,\ldots,m_g-1\}$; to my
  knowledge it is the first aggregation result for staggered DiD
  under cohort-heterogeneous feasible orders.
\end{remark}

\section{Estimation and Inference}
\label{sec:est}

\subsection{The DD[\texorpdfstring{$p$}{p}] Estimator}

Let $\bar Y_{g,t} = N_g^{-1}\sum_{i:G_i=g}Y_{it}$ and $\bar Y_{\infty,t}$ be the cohort and never-treated sample means.

\medskip\noindent
\textbf{Step 1. Sample gaps.}
$\hat\gamma_{g,t} = \bar Y_{g,t} - \bar Y_{\infty,t}$.

\medskip\noindent
\textbf{Step 2. Polynomial fit.}
Fit a polynomial of degree $p_g - 1$ by OLS to the $m_g$
pre-treatment gap observations:
\begin{equation}
  (\hat c_{g,0},\ldots,\hat c_{g,p_g-1})
  = \arg\min_{c} \sum_{t<g}
  \Bigl(\hat\gamma_{g,t}-\sum_{k=0}^{p_g-1}c_k t^k\Bigr)^2.
  \label{eq:ols}
\end{equation}
When $m_g = p_g$, this is exact interpolation; when $m_g > p_g$,
the OLS residuals provide the over-identifying restrictions used
in Section~\ref{sec:orderselection}.\footnote{To improve numerical stability, particularly at orders $p \geq 3$, time should be centred at the mean pre-treatment year before constructing the polynomial basis. Results are invariant to this reparametrisation but condition numbers are substantially reduced.}

\begin{remark}[OLS estimator versus minimum-point counterfactual]
  \label{rem:ols_vs_interp}
  Proposition~\ref{prop:cf} identifies the counterfactual from
  any $p_g$ pre-treatment observations, most naturally the $p_g$
  most recent.
  Step~2 uses all $m_g$ pre-treatment observations in an OLS fit.
  Under exact Parallel[$p_g$], both approaches recover the same
  population polynomial, but they are different finite-sample
  estimators when $m_g > p_g$: the OLS fit exploits all available
  pre-treatment data and is generally more efficient.
  In particular, at $p_g = 1$ the OLS estimator averages all
  pre-treatment gaps rather than using only the last pre-period
  observation, so it does not reduce to a base-period comparison
  in finite samples.
  Applied researchers wishing strict numerical equivalence to
  Callaway--Sant'Anna at $p_g=1$ should use \texttt{csdid}
  directly; the present estimator delivers the same identification
  but uses a different finite-sample implementation.
\end{remark}

\medskip\noindent
\textbf{Step 3 --- Counterfactual and ATT.}
$\hat\gamma_{g,t}(0) = \sum_{k=0}^{p_g-1}\hat c_{g,k}\,t^k$
and:
\begin{equation}
  \widehat{\mathrm{ATT}}^{(p_g)}(g,t)
  = \hat\gamma_{g,t} - \hat\gamma_{g,t}(0).
  \label{eq:atthat}
\end{equation}

\medskip\noindent
\textbf{Step 4 --- Aggregate.}
$\hat\theta(\pgvec) = \sum_{g,t\geq g}
\hat w_{g,t}\,\widehat{\mathrm{ATT}}^{(p_g)}(g,t)$,
using cohort-share weights
$\hat w_g = N_g / N_{\text{treated}}$, uniform across post-treatment
event times within each cohort
(see Table~\ref{tab:weights} for sensitivity to alternative schemes).

\medskip\noindent
\textbf{Three reporting strategies.}
Three canonical order assignments provide a sensitivity check.
\emph{Strategy I} (Uniform Conservative): $p_g = 1$ for all
cohorts, applying the strictest common assumption uniformly.
\emph{Strategy II} (Uniform Feasible): choose target $p \leq \min_g m_g$, restrict
to $\mathcal{F}(p)$, same assumption for all included cohorts.
\emph{Strategy III} (Cohort Maximum): $p_g = m_g - 1$, the highest
feasible order for each cohort given its pre-treatment series,
ensuring at least one residual degree of freedom for the polynomial fit.
This strategy fully exploits the heterogeneous-order aggregation of
Theorem~\ref{thm:aggregation} and is the recommended default when
pre-treatment series are long and R$^2$ diagnostics support higher-order
fits across cohorts.  When pre-treatment series are short, the cohort-maximum order may be poorly identified; in such cases the sensitivity table and R$^2$ diagnostics should guide
the choice between Strategy II and Strategy III.\footnote{Reporting all three strategies provides a sensitivity check on the choice of cohort weighting scheme.}

\begin{remark}[Uniform versus cohort-heterogeneous orders in practice]
  \label{rem:uniform}
  Theorem~\ref{thm:aggregation} establishes identification when cohorts are identified under different feasible orders $p_g$. In practice, researchers may either let the sequential algorithm in Section~\ref{sec:orderselection} assign each cohort its own selected order, or impose a single uniform order $p^\star$ chosen from the pre-period diagnostics, which simplifies interpretation and comparison across cohorts. Either choice is covered by the heterogeneous-order aggregation in Theorem~\ref{thm:aggregation}.

  Cohort-heterogeneous orders are appropriate when there is strong prior reason to believe different cohorts face different pre-trend structures, or when later-treated cohorts have substantially more pre-treatment periods and the researcher wishes to exploit additional data available for those cohorts. In either case, transparency requires reporting which order was
  applied to each cohort.
\end{remark}

\subsection{Asymptotic Properties}

\begin{proposition}[Asymptotic Normality]
  \label{prop:asym}
  Under Assumptions~\ref{ass:na}, \ref{ass:ov},
  and \ref{ass:ptp} (for the relevant order per cohort),
  bounded second moments, independent sampling across units,
  and $\pi_g > 0$ for all $g$:
  \begin{enumerate}[label=(\roman*)]
    \item $\sqrt{N}(\widehat{\mathrm{ATT}}^{(p_g)}(g,t) - \ATTgt)
      \xrightarrowd \mathcal{N}(0, V_{g,t})$.
    \item $\sqrt{N}(\hat\theta - \theta)
      \xrightarrowd \mathcal{N}(0, V)$.
    \item $V$ is consistently estimated by the cluster bootstrap
      described below.
  \end{enumerate}
\end{proposition}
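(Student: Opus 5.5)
The estimator is a fixed, known linear map applied to the vector of cohort and never-treated sample means. The plan is therefore a multivariate CLT on those means followed by the delta method; in fact the map is linear, so Slutsky's theorem suffices. The bootstrap claim then follows from standard bootstrap consistency for smooth functions of means.

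\textbf{Step 1: CLT for the sample means.}
Stack the vectors $\bar{\mathbf Y}_g = (\bar Y_{g,1},\ldots,\bar Y_{g,T})^\top$ for each cohort and $\bar{\mathbf Y}_\infty$ for the never-treated. Units are sampled independently and have bounded second moments, with $N_g/N = \pi_g > 0$ and $T$ fixed. Applying the Lindeberg--L\'evy CLT within each group gives
\[
\sqrt{N_g}\,\bigl(\bar{\mathbf Y}_g - \boldsymbol\mu_g\bigr) \xrightarrowd \mathcal N(0,\Sigma_g).
\]
The groups are independent, so
\[
\sqrt{N}\,\bigl(\bar{\mathbf Y} - \boldsymbol\mu\bigr) \xrightarrowd \mathcal N\bigl(0,\ \mathrm{diag}(\Sigma_g/\pi_g,\ \Sigma_\infty/\pi_\infty)\bigr).
\]
Here $\pi_\infty > 0$ is guaranteed by Assumption~\ref{ass:ov} applied to the never-treated group. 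The gaps $\hat\gamma_{g,t}$ are linear contrasts of these means, so they are jointly asymptotically normal.

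\textbf{Step 2: the estimator is linear in the gaps.}
Let $X_g$ be the $m_g\times p_g$ polynomial design matrix in the pre-treatment times. The time points are distinct and $p_g \le m_g - 1$, so $X_g$ has full column rank (the Vandermonde argument in Proposition~\ref{prop:cf}), and $X_g^\top X_g$ is invertible and nonrandom. Hence
\[
\hat{\mathbf c}_g = (X_g^\top X_g)^{-1}X_g^\top \hat{\boldsymbol\gamma}_g^{\text{pre}}.
\]
Writing $x_t = (1,t,\ldots,t^{p_g-1})^\top$, the group-time estimator is
\[
\widehat{\mathrm{ATT}}^{(p_g)}(g,t) = \hat\gamma_{g,t} - x_t^\top \hat{\mathbf c}_g = \lambda_{g,t}^\top \hat{\boldsymbol\gamma}_g
\]
for a fixed vector $\lambda_{g,t}$.

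\textbf{Step 3: centring at the true parameter.}
Applying the same map to the population gaps gives
\[
\gamma_{g,t} - x_t^\top (X_g^\top X_g)^{-1}X_g^\top \boldsymbol\gamma_g^{\text{pre}}.
\]
By Lemma~\ref{lem:poly}, under Parallel[$p_g$] the population pre-gaps lie exactly on a polynomial in $\mathcal P_{p_g-1}$. OLS reproduces any vector in the column space of $X_g$, so the population fit returns that polynomial. By Proposition~\ref{prop:cf} and Theorem~\ref{thm:identification}, evaluating it at $t$ gives $\gamma_{g,t}(0)$, so the population value of the map is $\ATTgt$.

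This step is the one needing care. Misspecification of the polynomial order would produce a non-vanishing bias term, and that is excluded only because Parallel[$p_g$] is assumed to hold exactly. Since the map is linear, part (i) then follows with $V_{g,t} = \lambda_{g,t}^\top \Omega_g \lambda_{g,t}$, where $\Omega_g$ is the asymptotic covariance of $\sqrt N\,\hat{\boldsymbol\gamma}_g$.

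\textbf{Step 4: the aggregate.}
$\hat\theta$ uses the estimated weights $\hat w_g = N_g/N_{\text{treated}}$, spread uniformly over post-periods. These weights are ratios of the cohort counts, which are deterministic given the fixed-fraction design, so $\hat\theta$ is again a fixed linear combination $\Lambda^\top \bar{\mathbf Y}$ and part (ii) is immediate.

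If instead cohort sizes are treated as random (multinomial), $\hat\theta$ becomes a smooth function of sample means and indicator averages. The delta method then adds a weight-estimation term of the usual Callaway--Sant'Anna form, and I would spell out that term explicitly.

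\textbf{Step 5: the bootstrap.}
The statistic is a smooth (here linear) functional of i.i.d. unit-level vectors $(Y_{i1},\ldots,Y_{iT},\mathbf 1\{G_i=g\}_g)$ with finite second moments. Resampling units is therefore the nonparametric bootstrap for a mean, which is consistent in this setting (Bickel--Freedman, and the bootstrap delta method). This yields consistency of the bootstrap variance, or of the bootstrap distribution, for part (iii).

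A technical caveat concerns bootstrap draws in which a cohort is empty. This event has probability tending to zero exponentially fast, so it does not affect the limit. Strictly, consistency of the bootstrap variance (as opposed to the bootstrap distribution) needs uniform integrability, which holds under slightly more than second moments or by truncating these degenerate draws; I would note that and otherwise treat the step as routine.
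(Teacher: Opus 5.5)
Your proposal is correct and takes essentially the same route as the paper's proof in Appendix~\ref{app:proof_asym}: a joint CLT for the sample gaps, the fixed linear OLS/Vandermonde projection, and a unit-level cluster bootstrap justified by bootstrap consistency for (functions of) sample means. The differences are refinements of presentation. You stack all cohort and never-treated means from the outset, which yields in one step the pre/post cross-block and the cross-cohort covariances that the paper derives separately. You also make explicit several points the paper leaves implicit: the centring at $\ATTgt$ via Lemma~\ref{lem:poly}, the weight-estimation term, and the caveats about degenerate bootstrap draws and uniform integrability.
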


\begin{proof}
  The sample gap $\hat\gamma_{g,t}$ is a difference of sample
  means.
  By independence across units and bounded second moments:
  $\sqrt{N}(\hat\gamma_{g,t} - \gamma_{g,t})
  \xrightarrowd \mathcal{N}(0, \sigma^2_{g,t})$
  by the central limit theorem, where
  $\sigma^2_{g,t} = \Var(Y_{it}\mid G_i=g)/\pi_g
  + \Var(Y_{it}\mid G_i=\infty)/\pi_\infty$.

  The OLS polynomial coefficients $\hat{\mathbf{c}}_g$ are a linear
  function of the pre-treatment gaps $\{\hat\gamma_{g,t}\}_{t<g}$:
  $\hat{\mathbf{c}}_g = (\mathbf{V}^\top\mathbf{V})^{-1}
  \mathbf{V}^\top\hat{\boldsymbol{\gamma}}_g^{\mathrm{pre}}$,
  where $\mathbf{V}$ is the $(m_g\times p_g)$ Vandermonde matrix
  of pre-treatment time polynomials.
  By the delta method applied to the jointly normal vector of
  pre-treatment sample gaps,
  $\sqrt{N}(\hat{\mathbf{c}}_g - \mathbf{c}_g)
  \xrightarrowd \mathcal{N}(\mathbf{0}, \boldsymbol{\Sigma}_c)$
  where
  $\boldsymbol{\Sigma}_c
  = (\mathbf{V}^\top\mathbf{V})^{-1}\mathbf{V}^\top
  \boldsymbol{\Sigma}_\gamma^{\mathrm{pre}}
  \mathbf{V}(\mathbf{V}^\top\mathbf{V})^{-1}$
  and $\boldsymbol{\Sigma}_\gamma^{\mathrm{pre}}$ is the covariance
  matrix of the pre-treatment gaps.

  The counterfactual $\hat\gamma_{g,t}(0)
  = \mathbf{v}_t^\top\hat{\mathbf{c}}_g$,
  where $\mathbf{v}_t = (1,t,\ldots,t^{p_g-1})^\top$, is linear
  in $\hat{\mathbf{c}}_g$, hence also asymptotically normal by
  the delta method.
  The ATT estimate
  $\widehat{\mathrm{ATT}}^{(p_g)}(g,t)
  = \hat\gamma_{g,t} - \hat\gamma_{g,t}(0)$
  is the difference of two jointly asymptotically normal
  quantities, hence asymptotically normal.

  Part (ii): the aggregate $\hat\theta$ is a finite weighted sum
  of cohort-level ATT estimates.
  All cohorts share the same never-treated control group, so their
  ATT estimators are not independent: sampling variation in the
  common control group enters every
  $\widehat{\mathrm{ATT}}^{(p_g)}(g,t)$.
  The correct aggregate variance, as the limit of $N$ times
  the finite-sample covariance matrix, is:
  $V = \lim_{N\to\infty} N \cdot
  \sum_{g,t,g',t'} w_{g,t}w_{g',t'}
  \Cov(\widehat{\mathrm{ATT}}^{(p_g)}(g,t),
  \widehat{\mathrm{ATT}}^{(p_{g'})}(g',t'))$,
  which includes cross-cohort covariance terms arising through the
  shared control.
  Asymptotic normality of $\hat\theta$ nonetheless follows because
  $\hat\theta$ is a linear function of the jointly asymptotically
  normal vector of all gap estimates, and the delta method applies.
  The full variance $V$, including cross-cohort terms, is
  consistently estimated by the cluster bootstrap (Part iii).

  Part (iii): the cluster bootstrap (resampling entire units)
  consistently estimates the asymptotic variance under the
  stated regularity conditions, because it correctly replicates
  the joint sampling distribution of $(\hat\gamma_{g,t})_{g,t}$
  across cohorts; see \citet{callaway2021difference} for details
  of the argument in the analogous setting.\footnote{Note that serial correlation within units is  accommodated by the cluster bootstrap, which resamples entire units and thereby preserves within-unit  dependence across time. The analytical variance formula in Part (i) assumes independence across time periods within units and should not be used directly under serial correlation; inference should rely on the bootstrap throughout.}
\end{proof}

\begin{remark}[Small cohorts and finite-sample fragility]
  \label{rem:small_cohorts}
  Proposition~\ref{prop:asym} assumes each cohort fraction
  $\pi_g$ remains positive as $N \to \infty$.
  In practice, the delayed-expansion cohorts in the Medicaid
  application contain 3, 2, 1, and 2 states respectively.
  A percentile cluster bootstrap cannot establish a central
  limit theorem for a singleton treated cohort.
  Inferential results for cells involving the 2017 cohort
  (one state) and other small cohorts should be treated with
  caution; the reported estimates for these cohorts are best
  understood as descriptive comparisons rather than
  asymptotically valid inference.
\end{remark}

\textbf{Influence function.}
The influence function for
$\widehat{\mathrm{ATT}}^{(p_g)}(g,t)$ has two components.
Let $\psi_{it}^{\mathrm{post}}$ be the unit's contribution to
the post-treatment gap $\hat\gamma_{g,t}$, and let
$\psi_{it}^{\mathrm{pre}}$ capture its contribution to the
polynomial coefficients through the pre-treatment gaps.
The counterfactual evaluated at $t$ is linear in the
pre-treatment gaps via the Vandermonde projection:
$\hat\gamma_{g,t}(0)
= \mathbf{v}_t^\top(\mathbf{V}^\top\mathbf{V})^{-1}
\mathbf{V}^\top\hat{\boldsymbol{\gamma}}_g^{\mathrm{pre}}$.
The influence function for $\hat\gamma_{g,t}(0)$ is therefore a
linear combination of influence functions for the pre-treatment
gaps, with weights
$\mathbf{v}_t^\top(\mathbf{V}^\top\mathbf{V})^{-1}
\mathbf{V}^\top$.
The full influence function for
$\widehat{\mathrm{ATT}}^{(p_g)}(g,t)$
is $\psi_{it} = \psi_{it}^{\mathrm{post}} - \psi_{it}^{\mathrm{cf}}$,
where $\psi_{it}^{\mathrm{cf}}$ is the component arising from
polynomial estimation.
When $p_g=1$, $\psi_{it}^{\mathrm{cf}}$ is analogous to the
pre-treatment contribution in the Callaway--Sant'Anna influence
function, though the two differ when more than one pre-treatment
period is used: the present estimator averages all pre-treatment
gaps, while the Callaway--Sant'Anna construction uses the
relevant base period.

\textbf{Cluster bootstrap.}
For state-level panel data, all inference uses the cluster
bootstrap: resample entire states with replacement ($B$ draws),
compute $\widehat{\mathrm{ATT}}^{(p_g)}(g,t)$ on each draw,
and construct 95\,\% confidence intervals using the normal approximation
(estimate $\pm z_{0.025} \times$ bootstrap standard error).
Section~\ref{sec:simulation} shows that $B = 999$ draws yields
stable standard errors for datasets of the size studied here.

\paragraph{Dependence across cohorts.}
When states belong to the same geographical or institutional
cluster (e.g., Census division), their outcomes may be correlated
across cohorts.
The cluster bootstrap at the state level accounts for within-state
dependence across time but assumes independence across states.
In settings with strong cross-state dependence (e.g., common macro
shocks affecting all states simultaneously), the standard errors
may be understated.
One recommendation is to supplement with randomisation inference,
assigning treatment dates randomly to
never-treated states and checking whether the estimated ATT exceeds
the permutation distribution \citep{rambachan2023more}.

\section{Order Selection and Diagnostic Tests}
\label{sec:orderselection}

\subsection{Testable Restrictions}

Lemma~\ref{lem:poly} implies that when $m_g > p$, Parallel[$p$]
generates $m_g - p$ testable restrictions on pre-treatment data.
For cohort $g$, define:
\begin{equation}
  T_g(p)
  = \frac{N_g N_\infty}{N}
  \sum_{t=t_{\min}+p}^{g-1}
  \frac{(\difp{p}\hat\gamma_{g,t})^2}
       {\widehat{\mathrm{Var}}(\difp{p}\hat\gamma_{g,t})}.
  \label{eq:tstat}
\end{equation}
Under the null Parallel[$p$] and the additional assumption that
the higher-order differences $\difp{p}\hat\gamma_{g,t}$ are
asymptotically uncorrelated across $t$, $T_g(p) \xrightarrowd \chi^2(m_g-p)$.
This uncorrelatedness condition is \emph{not} automatically
satisfied even under serial independence of raw outcomes:
applying the $p$-th difference operator to an i.i.d.\ sequence
mechanically induces a moving-average dependence of order $p-1$
among the differenced terms.
In practice, $T_g(p)$ is therefore best treated as a descriptive
diagnostic rather than a formal test; researchers should
complement it with the pre-period $R^2$ and visual residual
inspection, and rely on the cluster bootstrap for inference.
Pooling across cohorts:
\begin{equation}
  T(p) = \sum_g T_g(p)
  \;\xrightarrowd\; \chi^2\!\Bigl(\sum_g(m_g-p)\Bigr)
  \quad \text{under Parallel[$p$]}.
  \label{eq:jtstat}
\end{equation}
In settings with serial correlation or strong cross-cohort
dependence, the chi-square approximation may be unreliable and
practitioners should rely on permutation-based critical values or
treat the statistic as a descriptive diagnostic rather than a formal
test. The cluster bootstrap used for inference does not depend on
these distributional assumptions.

\subsection{Sequential Algorithm}

Let $\alpha$ be the significance level. For each cohort $g$
the search runs up to the cohort-specific cap
$p_{\max,g} = m_g - 1$.

\begin{enumerate}[label=Step~\arabic*.]
  \item For each cohort $g$, test $T_g(1)$ at level $\alpha$.
    If not rejected, adopt $p_g=1$ for that cohort.
  \item If $T_g(1)$ rejected, test $T_g(2)$.
    If not rejected, adopt $p_g=2$.
  \item Continue until either the current test is not rejected
    or $p_g = p_{\max,g}$.
\end{enumerate}

\noindent The procedure is applied independently to each cohort,
allowing different cohorts to be identified under different selected
orders. The pre-period $R^2$ from the polynomial fit provides a
complementary diagnostic: low $R^2$ at a given order signals poor
polynomial fit regardless of the test outcome, which can occur when
pre-treatment series are short and the F-test has limited power.
In practice, $R^2$ and the sensitivity of the aggregate ATT across
orders should guide order selection alongside the formal test.

\begin{remark}[Polynomial misspecification diagnostic]
  \label{rem:poly_diag}
  Before applying the sequential test, I recommend inspecting the polynomial fit via the in-sample $R^2$ from the pre-period OLS regression. Low $R^2$ signals that the pre-treatment gap does not follow a polynomial of the chosen degree, potentially indicating structural breaks, seasonal patterns, or logistic growth. Section~\ref{sec:simulation} shows that under such misspecification, $R^2$ deteriorates noticeably, providing an early warning. In such cases, the polynomial extrapolation may not be reliable and 
researchers may find it useful to complement the point estimates with sensitivity bounds 
\citep{rambachan2023more} to assess robustness to departures from the polynomial structure.
\end{remark}

\begin{remark}[Extrapolation horizon and polynomial behaviour]

\label{rem:roth_poly}

A potential limitation of polynomial extrapolation is that
fitted polynomials can behave erratically when projected
beyond the support of the estimation data.
An analogous concern motivates the critique of high-degree
polynomial regression in the regression discontinuity
literature \citep{gelman2019high}, though the setting here
differs: the polynomial is fitted to pre-treatment gaps
in time rather than to outcomes near a discontinuity
threshold. Appendix~\ref{app:horizon} quantifies how performance
changes with the extrapolation horizon.
Variance rather than bias is the primary cost of longer horizons.
Applied researchers working with long post-treatment windows
may find it useful to complement DD[$p$] estimates with
the derivative-bounded sensitivity bounds of
\citet{rambachan2023more}, which do not rely on polynomial
extrapolation.

\end{remark}

\begin{remark}[Post-selection inference]
  I recommend reporting estimates for $p=1,2,3$ regardless of the sequential test outcome. The selected order is the primary specification; others are robustness checks.
  Section~\ref{sec:simulation} demonstrates that when the correct
  order is selected, bootstrap confidence intervals achieve
  approximately nominal coverage. When over-selection occurs (using $p=2$ when $p=1$ is true), coverage deteriorates, this is the cost of unnecessary flexibility. The sequential test minimises over-selection by adopting the
  lowest order the data do not reject.
\end{remark}

\section{Monte Carlo Evidence}
\label{sec:simulation}

\subsection{Main Simulation Design}

I evaluate DD[$p$] for $p \in \{1,2,3\}$ under three
data-generating processes.
The panel has $N=300$ units, $T=12$ periods, three cohorts
($g=5,7,9$) with 4, 6, and 8 pre-treatment periods, and
40\,\% never-treated.
True ATT $= 0.5$ throughout.
Results are based on 500 replications.

\textbf{DGP-1} (\emph{True Parallel[1]}).
$Y_{it}(\infty) = \alpha_i + \lambda_t + \varepsilon_{it}$,
$\alpha_i \iid \mathcal{N}(0,1)$, $\varepsilon_{it} \iid
\mathcal{N}(0,1)$, $\lambda_t$ a common trend.
DD[1] is the efficient estimator.

\textbf{DGP-2} (\emph{True Parallel[2], not Parallel[1]}).
$Y_{it}(\infty) = \alpha_i + \beta_g t + \lambda_t
+ \varepsilon_{it}$, $\beta_g \iid \mathcal{N}(0,0.5)$.
Cohort-specific linear trends violate Parallel[1].
DD[2] is the correctly specified estimator.

\textbf{DGP-3} (\emph{True Parallel[3]}).
$Y_{it}(\infty) = \alpha_i + \beta_g t + \gamma_g t^2
+ \lambda_t + \varepsilon_{it}$, $\gamma_g \iid
\mathcal{N}(0,0.15)$.
Cohort-specific quadratic trends.

Table~\ref{tab:simresults} reports bias and RMSE; Figure~\ref{fig:sim} displays them.

\begin{table}[H]
\centering
\caption{Monte Carlo: Bias and RMSE of DD[$p$] Estimators}
\label{tab:simresults}
\small
\begin{tabular}{lcccccc}
\toprule
 & \multicolumn{2}{c}{\textbf{DD[1]}}
 & \multicolumn{2}{c}{\textbf{DD[2]}}
 & \multicolumn{2}{c}{\textbf{DD[3]}} \\
\cmidrule(lr){2-3}\cmidrule(lr){4-5}\cmidrule(lr){6-7}
\textbf{DGP} & Bias & RMSE & Bias & RMSE & Bias & RMSE \\
\midrule
True Parallel[1]
  & \textbf{0.000} & \textbf{0.066}
  & 0.004 & 0.193 & 0.023 & 1.249 \\
True Parallel[2]
  & $-$0.001 & 0.240
  & \textbf{0.006} & \textbf{0.187}
  & 0.006 & 1.157 \\
True Parallel[3]
  & $-$0.025 & 0.937
  & $-$0.008 & 0.454
  & \textbf{0.019} & \textbf{1.244} \\
\bottomrule
\end{tabular}
\begin{minipage}{0.94\textwidth}\footnotesize\medskip
\textit{Notes}: Bold identifies the estimator matching the true DGP. Under DGP-2, DD[1] has elevated RMSE (0.240 vs.\ 0.187) without directional bias, because cohort-specific slopes $\beta_g$ have mean zero; the simulation isolates the \emph{variance} cost of misspecification. DD[3] has large RMSE in all settings because fitting a quadratic to four to eight pre-treatment observations is imprecise here; with longer pre-treatment series, RMSE would be expected to fall, consistent with the general guidance in Section~\ref{sec:est} that higher orders are most reliable when pre-treatment series are long and diagnostics support them.
\end{minipage}
\end{table}

\begin{figure}[H]
  \centering
  \includegraphics[width=0.96\textwidth]{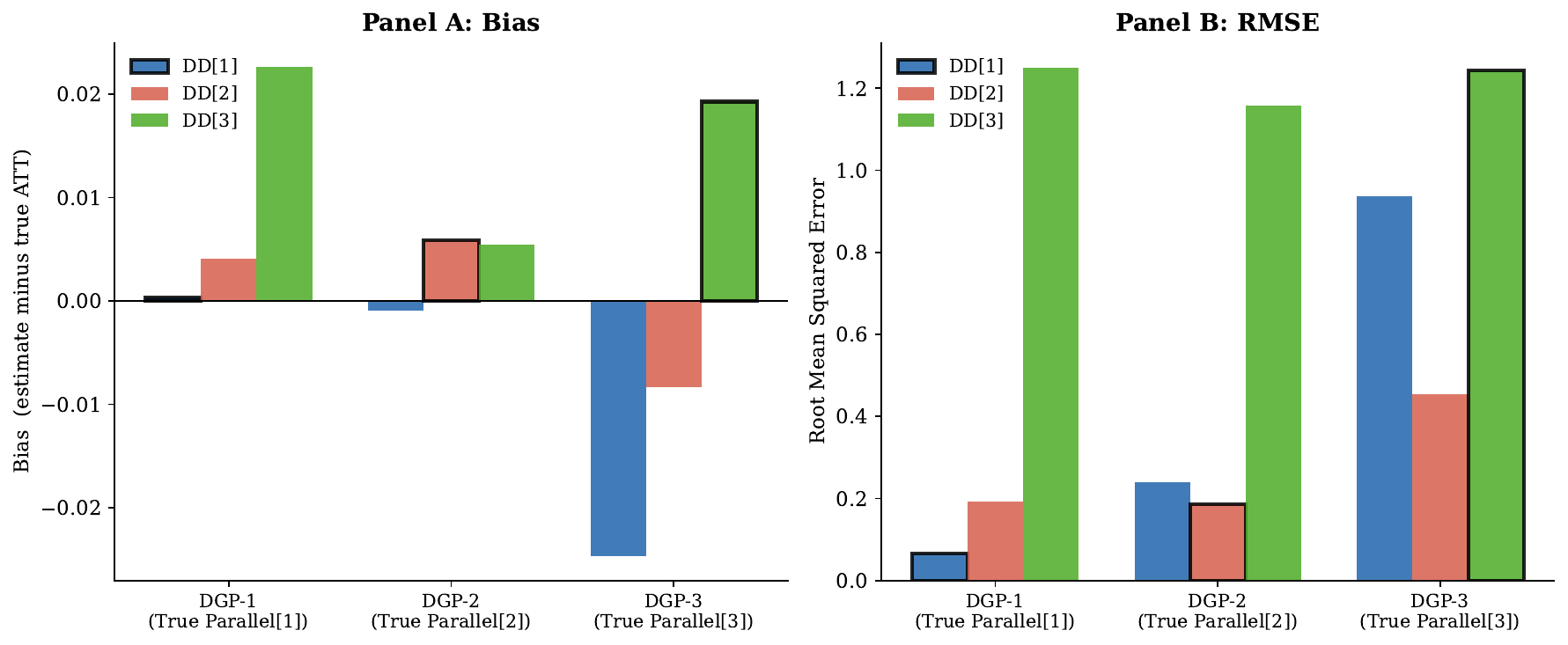}
  \caption{Simulation Study: Bias and RMSE by Estimator and DGP}
  \label{fig:sim}
  \begin{minipage}{0.94\textwidth}\footnotesize\medskip
    \textit{Notes}: Each bar shows bias (Panel A) or RMSE (Panel B)
    for one estimator under one DGP.
    Bold borders identify the estimator matching the true DGP.
    500 replications; see Table~\ref{tab:simresults} for details.
  \end{minipage}
\end{figure}

\subsection{Post-Selection Inference}
\label{subsec:postsel}

Table~\ref{tab:postselection} reports order-selection
frequencies and empirical coverage of 95\,\% confidence intervals
after the sequential order-selection procedure, based on 500
simulation draws with $B=99$ cluster bootstrap replications per draw.
Each row corresponds to a different data-generating process;
the selection frequencies show which order the sequential F-test
chooses under each DGP, and the coverage column reports how often
the resulting confidence interval contains the true ATT.

\begin{table}[H]
\centering
\caption{Bootstrap Coverage after Sequential Order Selection}
\label{tab:postselection}
\small
\begin{tabular}{lccccc}
\toprule
 & \multicolumn{3}{c}{\textbf{Selection frequency}} & & \\
\cmidrule(lr){2-4}
\textbf{DGP} & $p=1$ & $p=2$ & $p=3$ & \textbf{Coverage} & \textbf{Bias} \\
\midrule
True Parallel[1] (flat gap)          & 85.4\% & 11.2\% &  3.4\% & 89.8\% & $\phantom{-}$0.003 \\
True Parallel[2] (linear trend)      & 38.4\% & 52.0\% &  9.6\% & 93.6\% & $-$0.002 \\
True Parallel[3] (quadratic trend)   &  9.8\% & 59.2\% & 31.0\% & 93.8\% & $-$0.099 \\
Borderline (very weak linear trend)  & 86.8\% & 10.4\% &  2.8\% & 89.2\% & $-$0.019 \\
\bottomrule
\end{tabular}
\begin{minipage}{0.96\textwidth}\footnotesize\medskip
\textit{Notes}: 500 replications, $N=300$, $T=12$, true ATT $=0.5$.
Three cohorts at $g=5,7,9$.
Coverage is the empirical frequency with which the 95\% cluster
bootstrap CI (B=99) contains the true ATT after the sequential
order-selection step.
Under DGP-1 and the borderline DGP, coverage is slightly below
the nominal 95\%, reflecting finite-sample bootstrap approximation
at $B=99$; bias is negligible in both cases.
The empirical application uses $B=999$.
Under DGP-2, the sequential test selects the correct order ($p=2$)
in 52\% of draws; under-selection to $p=1$ accounts for the
remainder and slightly widens the coverage band.
Under DGP-3, the test stops at $p=2$ in 59\% of draws, reflecting
the difficulty of detecting quadratic structure at $N=300$.
\end{minipage}
\end{table}

\noindent
Selection frequencies are consistent with the test's design:
it selects the lowest order whose pre-treatment implications the
data do not reject, defaulting toward parsimony.
Coverage after selection is near-nominal for DGP-2 and DGP-3
(93.6\% and 93.8\%) and slightly below nominal for the flat and
borderline DGPs (89.8\% and 89.2\%), where finite-sample
bootstrap imprecision at $B=99$ accounts for the shortfall.

\subsection{Serial Correlation and Placebo}

Table~\ref{tab:robustness} reports two additional robustness checks.
Panel A shows that DD[2] maintains near-nominal coverage under
AR(1) serial correlation with persistence $\rho$ up to 0.7,
ranging from 91.4\% at $\rho=0$ to 94.4\% at $\rho=0.7$.
This reflects that the estimator operates on cohort-level averages,
which attenuate within-unit serial correlation.
Panel B shows a placebo false-positive rate of 7.2\%, modestly
above the nominal 5\%, consistent with the finite-sample bootstrap
behaviour at $B=99$ documented in Table~\ref{tab:postselection};
the mean estimated ATT under the null is essentially zero,
confirming the absence of systematic bias.

\begin{table}[H]
\centering
\caption{Robustness: Serial Correlation (Panel A) and
  Placebo (Panel B)}
\label{tab:robustness}
\small
\begin{tabular}{lcccc}
\toprule
\multicolumn{5}{l}{\textbf{Panel A: AR(1) serial correlation},
  true Parallel[2], DD[2], 95\% CI} \\
\midrule
AR(1) persistence ($\rho$) & 0.0 & 0.3 & 0.5 & 0.7 \\
Coverage & 91.4\% & 92.2\% & 94.0\% & 94.4\% \\
Bias & $\phantom{-}$0.023 & $-$0.019 & $-$0.009 & $\phantom{-}$0.002 \\
\midrule
\multicolumn{5}{l}{\textbf{Panel B: Placebo test},
  true ATT $= 0$, DGP-2 structure, DD[2], 500 replications} \\
\midrule
False positive rate (nominal 5\%) & \multicolumn{4}{l}{7.2\%} \\
Mean ATT under $H_0$ & \multicolumn{4}{l}{$-$0.008} \\
\bottomrule
\end{tabular}
\begin{minipage}{0.92\textwidth}\footnotesize\medskip
\textit{Notes}: 500 replications per scenario, $N=300$, $T=12$,
$B=99$ bootstrap draws.
Panel A shows that the cluster bootstrap maintains near-nominal
coverage under AR(1) serial correlation up to $\rho=0.7$,
because the estimator operates on cohort-level averages which
attenuate within-unit serial dependence.
Panel B false positive rate of 7.2\% is modestly above the
nominal 5\%, consistent with the slight under-coverage observed
in Table~\ref{tab:postselection} at $B=99$; the mean ATT under
$H_0$ is essentially zero, confirming the absence of systematic bias.
\end{minipage}
\end{table}

\subsection{Failure Modes and the Polynomial Diagnostic}

The simulations above calibrate DGPs to exact polynomial trends.
When the pre-treatment gap follows a non-polynomial trajectory, such as a structural break or logistic growth, Parallel[$p$] is misspecified regardless of the chosen order.

The practical diagnostic is the pre-period $R^2$ from the OLS
polynomial fit: under a genuine polynomial trend $R^2$ is high;
under structural breaks or other non-polynomial dynamics $R^2$
deteriorates noticeably, providing an early warning before the
sequential test is applied.
See Remark~\ref{rem:poly_diag}.
When the $R^2$ diagnostic signals misspecification, sensitivity bounds \citep{rambachan2023more} may be more appropriate.
Table~\ref{tab:bootstab} shows how the cluster bootstrap standard
error for the aggregate DD[2] ATT stabilises as $B$ increases,
using the Medicaid expansion application data.

\begin{table}[H]
\centering
\caption{Bootstrap Standard Error Stability as $B$ Increases:
  Medicaid Application}
\label{tab:bootstab}
\begin{tabular}{lrrrrrr}
\toprule
$B$ & 25 & 50 & 100 & 200 & 500 & 999 \\
\midrule
SE(DD[2]) & 0.0118 & 0.0103 & 0.0099 & 0.0100 & 0.0096 & 0.0096 \\
\bottomrule
\end{tabular}
\begin{minipage}{0.82\textwidth}\footnotesize\medskip
\textit{Notes}: Medicaid expansion application, 46 states, DD[2],
aggregate ATT.
$B=999$ used throughout; standard errors stabilise by $B=100$.
\end{minipage}
\end{table}

\section{Empirical Application: Medicaid Expansion}
\label{sec:empirical}

\subsection{Setting and Data}

I use a state-year panel covering 46 states observed over 
2008--2019, yielding 552 observations.\footnote{Data are 
drawn from the Mixtape Sessions Advanced DiD repository, made publicly 
available by Scott Cunningham and accessible at 
\url{https://raw.githubusercontent.com/Mixtape-Sessions/Advanced-DID/main/Exercises/Data/ehec_data.dta}. 
The underlying microdata source is the American Community 
Survey (ACS). } The outcome (\texttt{dins}) 
is the share of low-income childless adults with health 
insurance in each state and year, derived from the American Community Survey (ACS). 
Treatment is Medicaid expansion under the Affordable Care 
Act of 2010, which gave states the option to expand 
Medicaid eligibility beginning in 2014. Five expansion 
cohorts are present: 2014 (22 states, $m_g=6$), 2015 (3, 
$m_g=7$), 2016 (2, $m_g=8$), 2017 (1, $m_g=9$), and 
2019 (2, $m_g=11$). Sixteen states never expanded and 
serve as the never-treated comparison group throughout. 
Population weights are normalised to mean one to ensure 
interpretable test statistics.\footnote{Raw weights 
average 647,395 per state-year, inflating chi-squared 
statistics proportionally to the sum of weights rather 
than the number of observations when used un-normalised. 
Normalising preserves relative weighting across states.}

\subsection{Pre-Trend Diagnostics}

The Callaway--Sant'Anna aggregate ATT under Parallel[1], using
state population weights, is
$\hat\theta_{\mathrm{CS}} = 0.075$ (95\,\% CI: [0.051, 0.099],
$p < 0.001$).
The joint pre-trend test yields $\chi^2(36) = 65{,}921$,
$p < 0.001$, decisively rejecting standard parallel
trends.\footnote{The large chi-squared reflects importance-weight scaling in \texttt{csdid}; see footnote~1. The qualitative rejection is unaffected.}
The pre-treatment gap for the 2014 cohort rises steadily from
0.032 in 2008 to 0.042 in 2013, consistent with a linear trend.
I select $p^\star = 2$ as the primary specification
based on pre-period $R^2$ diagnostics and sensitivity stability
across orders (Table~\ref{tab:orderselection}).
For transparency, the fully data-driven sequential procedure
(\texttt{anddp} with \texttt{maxorder(4)}) selects $p_g = 1$ for the
2014 and 2016 cohorts, whose flat gap is not rejected (F-test
$p$-values 0.117 and 0.273, reflecting the limited power of the test
with 6--8 pre-treatment periods), and $p_g = 2$ for the 2015, 2017,
and 2019 cohorts, yielding an aggregate of 0.074 (SE 0.0096, 95\%
CI [0.055, 0.093]) under the OLS-mean baseline convention of
Section~\ref{sec:est}. I nonetheless adopt the uniform
$p^\star = 2$ as primary: the 2014 gap drifts monotonically upward
across all six pre-treatment years, the linear fit raises the
pre-period $R^2$ from 0.00 to 0.50, and the aggregate is stable
across orders (0.071, 0.064, and 0.060 at $p = 1, 2, 3$ under the
OLS-mean convention).

\begin{figure}[H]
  \centering
  \includegraphics[width=0.82\textwidth]{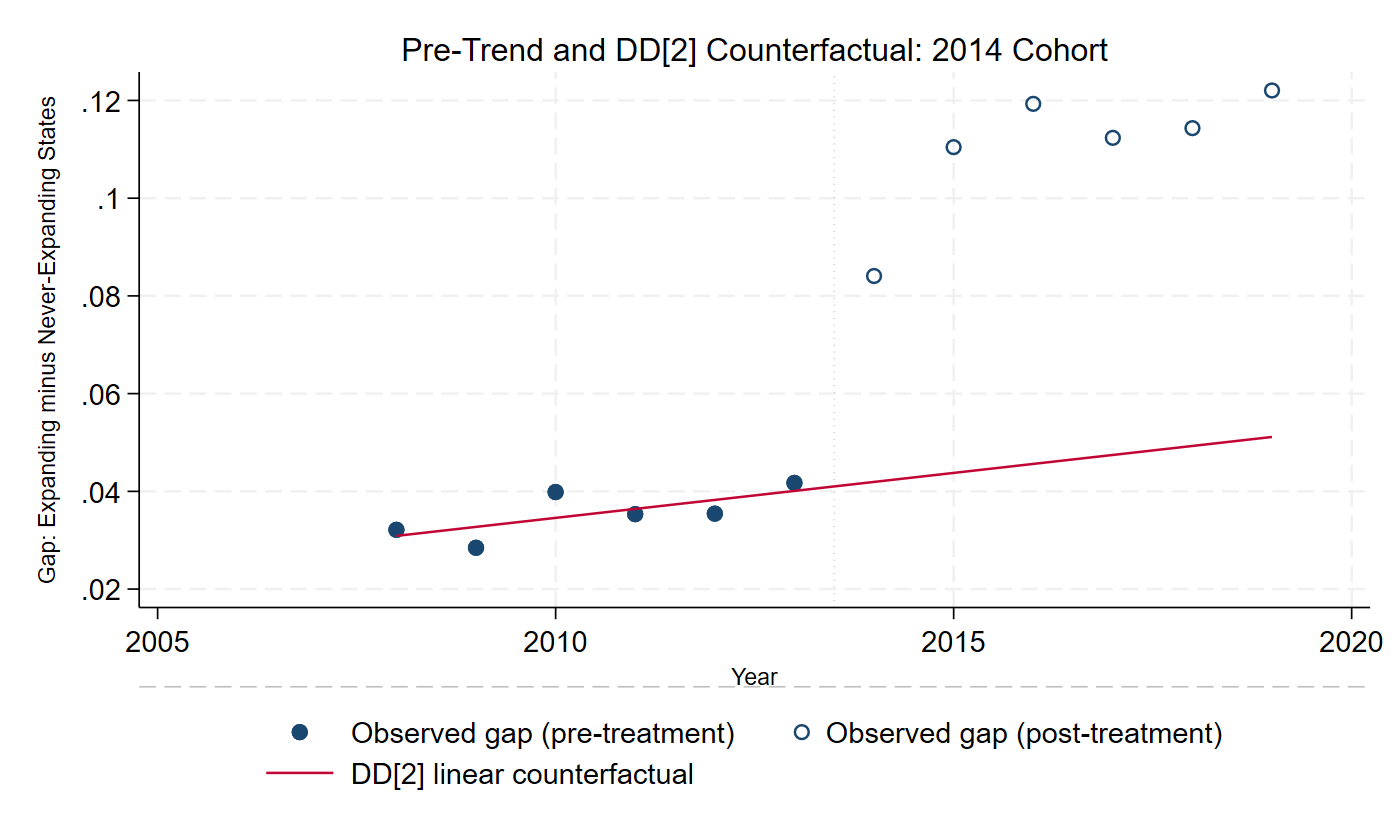}
  \caption{Pre-Treatment Gap and DD[2] Linear Counterfactual:
    2014 Expansion Cohort}
  \label{fig:gap}
  \begin{minipage}{0.82\textwidth}\footnotesize\medskip
    \textit{Notes}: Solid circles show the observed gap
    $\hat\gamma_{g,t}$ between the 2014 expansion cohort and
    never-treated states in each pre-treatment year (2008--2013).
    The red line is a linear trend fitted to the six pre-treatment
    gap observations by OLS and extended into the post-treatment
    period as the DD[2] counterfactual.
    The steady upward drift in pre-treatment observations confirms
    that standard parallel trends (which requires a flat gap) is
    violated, while the approximately linear pattern ($R^2 = 0.50$ for the
    linear fit) is consistent
    with Parallel[2] as the identifying assumption.

  \end{minipage}
\end{figure}

\begin{table}[H]
\centering
\caption{Pre-Trend Diagnostics and Order Selection}
\label{tab:orderselection}
\begin{tabular}{lcc}
\toprule
Diagnostic & Result & Decision \\
\midrule
CS aggregate ATT (weighted) & 0.075 [0.051, 0.099]$^{***}$ & Reference \\
Joint pre-trend test ($\chi^2$, 36 d.f.) & 65{,}921$^{***}$ & Reject $p=1$ \\
Pre-period polynomial fit ($p=2$) & $R^2$: 0.50, 0.87, 0.20, 0.54, 0.46 & Consistent with Parallel[2] for most cohorts \\
Selected order & $p^\star = 2$ & Primary specification \\
\bottomrule
\end{tabular}
\begin{minipage}{0.88\textwidth}\footnotesize\medskip
\textit{Notes}: $^{***}p<0.001$. $R^2$ values are the per-cohort fits of the linear ($p=2$) pre-treatment polynomial, listed in cohort order (2014, 2015, 2016, 2017, 2019). The sequential F-test is an additional diagnostic; however, when pre-treatment series are short, $R^2$ and sensitivity across orders are more reliable
guides. Pre-period implication of Parallel[2] not rejected for all cohorts.
\end{minipage}
\end{table}

\begin{table}[H]
\centering
\caption{Cohort-Specific Pre-Periods and Feasible Orders,
  Medicaid Expansion}
\label{tab:cohortorders}
\begin{tabular}{lcccc}
\toprule
Cohort & States ($N_g$) & Pre-periods ($m_g$) & Max feasible $p$ & Applied $p^\star$ \\
\midrule
2014 & 22 & 6 & 5 & 2 \\
2015 & 3  & 7 & 6 & 2 \\
2016 & 2  & 8 & 7 & 2 \\
2017 & 1  & 9 & 8 & 2 \\
2019 & 2  & 11 & 10 & 2 \\
Never treated & 16 & --- & --- & --- \\
\bottomrule
\end{tabular}
\begin{minipage}{0.92\textwidth}\footnotesize\medskip
\textit{Notes}: Order $p^\star=2$ is applied uniformly across cohorts, supported by pre-period $R^2$ diagnostics and stability of estimates across orders (Table~\ref{tab:orderselection}).
No cohorts are excluded under the Uniform Feasible strategy at $p=2$.
The 2017 cohort (one state) has the largest feasible order but smallest sample; inference for this cohort should be treated with caution. The companion Stata command \texttt{anddp} offers a wild cluster bootstrap option (\texttt{wcbootstrap}) for more reliable inference in such settings.
\end{minipage}
\end{table}

\subsection{Main Results}

Table~\ref{tab:main} reports DD[1], DD[2], and DD[3] estimates
at event times $\tau = 0,\ldots,4$ with 95\,\% confidence intervals
from the state-level cluster bootstrap ($B=999$).
Figure \ref{fig:comparison} displays the event-study comparison between DD[1] and the primary DD[2] specification. Figure \ref{fig:comparison1} extends this to include DD[3] as a robustness check.\footnote{The DD[1] estimates in Table~\ref{tab:main} use the last pre-treatment gap as the flat counterfactual, following the \citet{callaway2021difference} convention. The unweighted CS simple aggregate (0.068) is close to the DD[1] cohort-weighted average (0.065), since both use the same identifying assumption and the same baseline convention. The weighted CS estimate (0.075) differs because it uses state population weights; the DD[$p$] estimates in this table use equal cohort-share weights. The companion command \texttt{anddp} implements the OLS-mean baseline of Section~\ref{sec:est} for DD[1], averaging all pre-treatment gaps, which yields 0.071 in this application; the last-gap variant is reported in the table for direct comparability with CS.}

The findings show that \emph{first}, all three estimators find a positive, significant, and growing effect. Insurance coverage rose by approximately 4--5 percentage points in the year of expansion and 7--8 points four years later. This qualitative conclusion is robust across identification strategies.

\emph{Second}, the DD[1] cohort-weighted average (0.065) is nearly identical to the unweighted Callaway--Sant'Anna simple aggregate (0.068), confirming that DD[1] as implemented here --- using the last pre-treatment gap as the flat counterfactual --- reproduces the CS estimator closely. The gap between the weighted CS estimate (0.075) and these figures reflects population weighting, not the identifying assumption.

\emph{Third}, DD[2] consistently exceeds DD[1] at every event time, with differences of 0.4--1.0 percentage points (10--22 percent of the DD[1] estimate). This reflects that later-treated cohorts (2015--2019) were on a modest downward trajectory relative to never-expanding states before their expansion. The standard estimator, which assumes a flat counterfactual, underestimates the treatment effect for these cohorts. DD[2] corrects for this by projecting the downward pre-trend forward.

\emph{Fourth}, the confidence intervals of DD[1] and DD[2] overlap substantially at every event time: I cannot reject the null that the two estimators produce the same treatment effect. The contribution of DD[2] in this application is therefore not a reversal of the standard finding but a more credible quantification of it. The same qualitative conclusion,  Medicaid expansion increased insurance coverage, is supported by an assumption the pre-treatment data do not reject, rather than one they reject.

\begin{table}[H]
\centering
\caption{DD[$p$] Estimates by Event Time: Medicaid Expansion}
\label{tab:main}
\begin{tabular}{lcccccc}
\toprule
 & \multicolumn{2}{c}{\textbf{DD[1]}}
 & \multicolumn{2}{c}{\textbf{DD[2]}}
 & \multicolumn{2}{c}{\textbf{DD[3]}} \\
\cmidrule(lr){2-3}\cmidrule(lr){4-5}\cmidrule(lr){6-7}
$\tau$ & Estimate & 95\% CI & Estimate & 95\% CI & Estimate & 95\% CI \\
\midrule
0 & 0.042 & [0.029, 0.056] & 0.052 & [0.036, 0.067] & 0.045 & [0.029, 0.060] \\
1 & 0.057 & [0.043, 0.071] & 0.064 & [0.045, 0.082] & 0.055 & [0.030, 0.079] \\
2 & 0.068 & [0.056, 0.080] & 0.074 & [0.061, 0.087] & 0.061 & [0.015, 0.107] \\
3 & 0.073 & [0.058, 0.087] & 0.077 & [0.060, 0.095] & 0.077 & [0.009, 0.145] \\
4 & 0.073 & [0.054, 0.092] & 0.084 & [0.057, 0.111] & 0.074 & [$-$0.013, 0.160] \\
\midrule
Simple avg.    & 0.060 & & 0.067 & & --- & \\
Cohort-wtd.\ avg. & 0.065 & & 0.064 & & --- & \\
CS (unweighted) & \multicolumn{2}{c}{0.068 [0.053, 0.083]} & & & & \\
CS (weighted)  & \multicolumn{2}{c}{0.075 [0.051, 0.099]} & & & & \\
\bottomrule
\end{tabular}
\begin{minipage}{0.96\textwidth}\footnotesize\medskip
\textit{Notes}: Never-treated states as comparison group.
95\% CIs from state-level cluster bootstrap ($B=999$).
$\tau=0$: year of expansion. Outcome: \texttt{dins}.
Cohorts: 2014 (22 states), 2015 (3), 2016 (2), 2017 (1), 2019 (2).
$\tau=5$ excluded (only 2014 cohort; bootstrap SE unreliable).
CS aggregate (weighted) from \citet{callaway2021difference} under
Parallel[1], $p<0.001$; see also the unweighted figure (0.068)
reported in the table body and discussed in the footnote above.
DD[3] confidence intervals are wide, consistent with the large
RMSE of DD[3] documented in Table~\ref{tab:simresults} for short
pre-treatment series; DD[3] is reported alongside DD[1] and DD[2]
as a sensitivity check in this application.
\end{minipage}
\end{table}

\begin{figure}[H]
  \centering
  \includegraphics[width=0.96\textwidth]{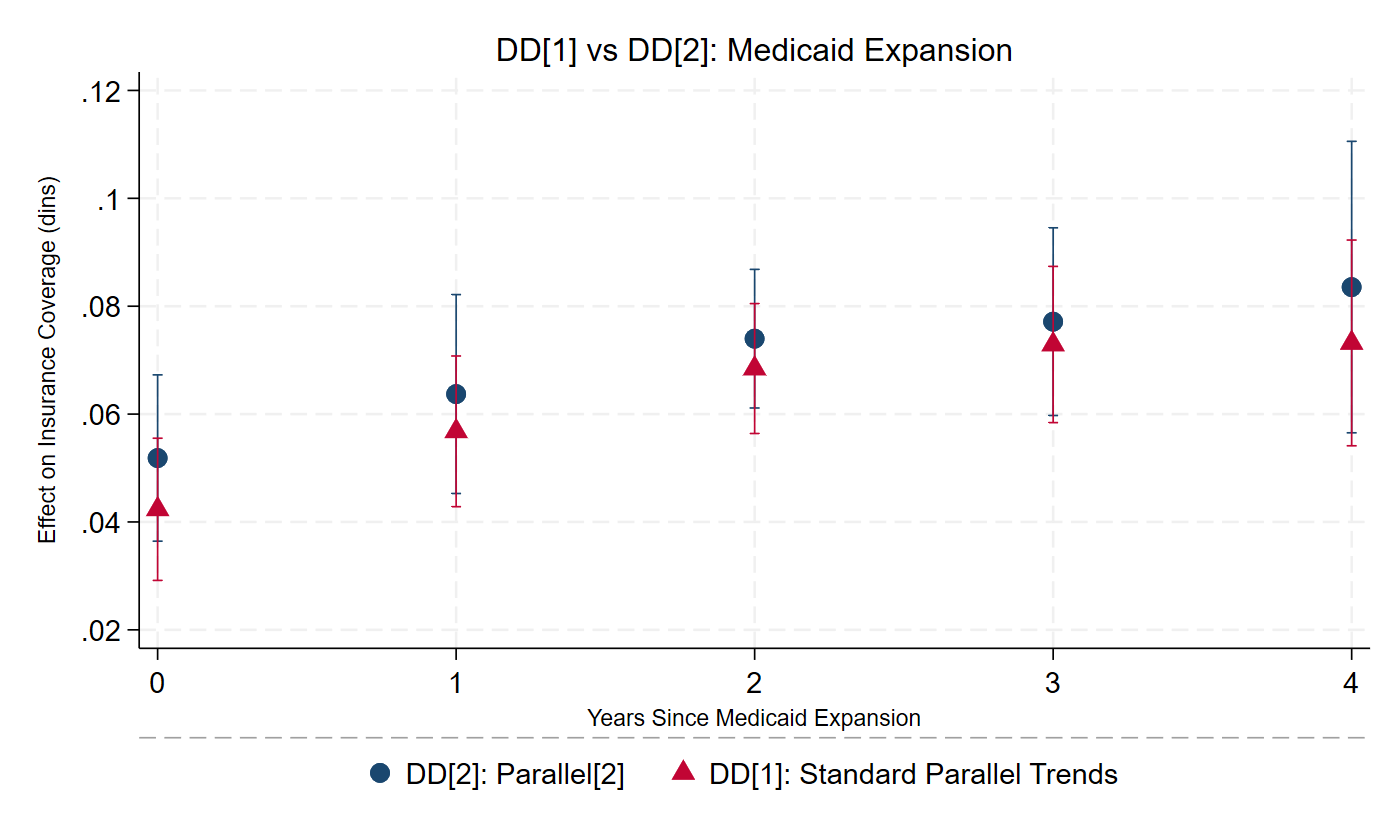}
  \caption{Event-Study Comparison: DD[1] and DD[2],
    Medicaid Expansion}
  \label{fig:comparison}
  \begin{minipage}{0.94\textwidth}\footnotesize\medskip
    \textit{Notes}: Never-treated states as comparison group.
    95\% CIs from cluster bootstrap ($B=999$).
    $\tau=5$ excluded.
  \end{minipage}
\end{figure}

\begin{figure}[H]
  \centering
  \includegraphics[width=0.96\textwidth]{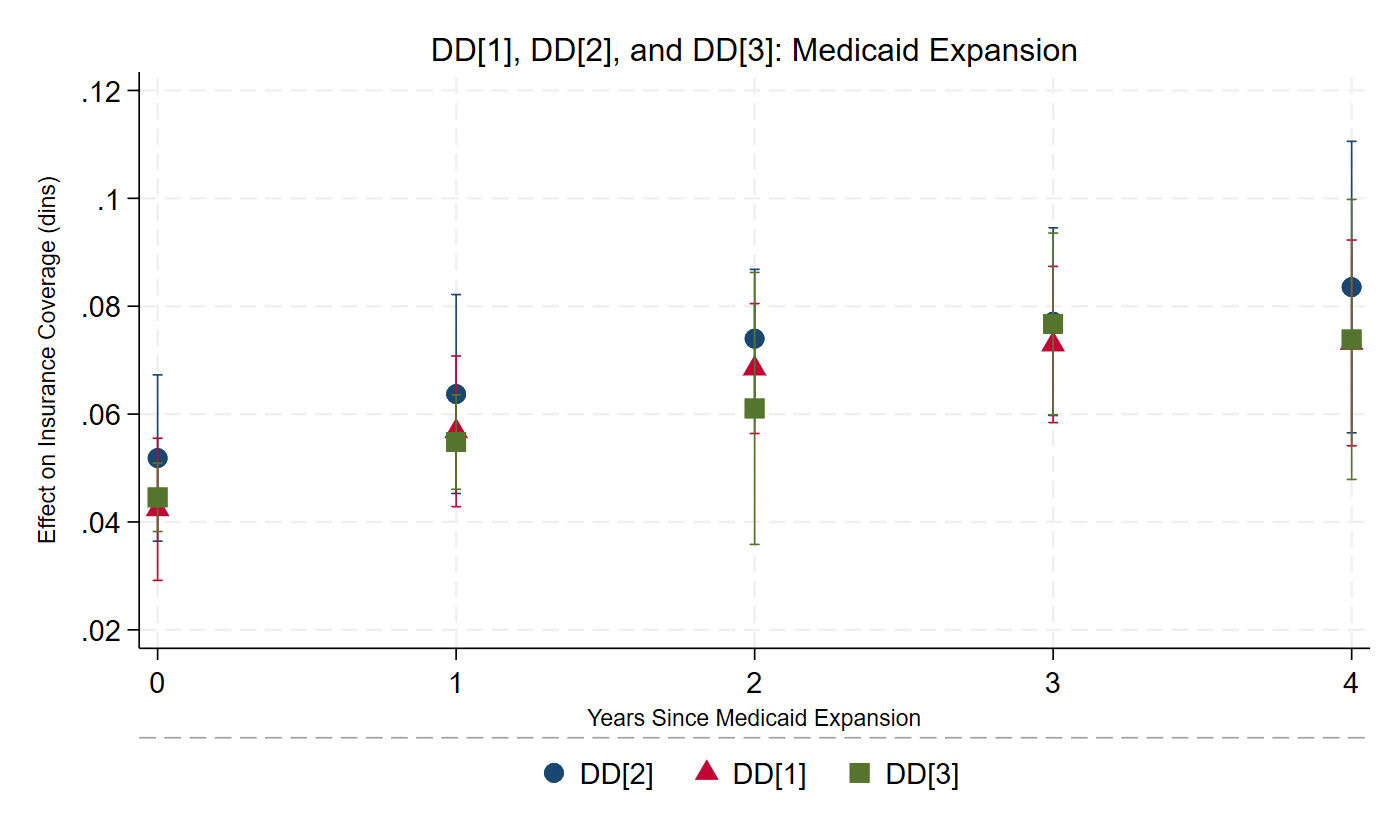}
  \caption{Event-Study Comparison: DD[1], DD[2] and DD[3],
    Medicaid Expansion}
  \label{fig:comparison1}
  \begin{minipage}{0.94\textwidth}\footnotesize\medskip
    \textit{Notes}: Never-treated states as comparison group.
    95\% CIs from cluster bootstrap ($B=999$).
    $\tau=5$ excluded.
  \end{minipage}
\end{figure}

Table~\ref{tab:summary} summarises estimates across all three specifications and records which pass the pre-trend test.

\begin{table}[H]
\centering
\caption{Summary of DD[$p$] Estimates and Diagnostic Tests,
  Medicaid Expansion}
\label{tab:summary}
\begin{tabular}{lccc}
\toprule
 & \textbf{DD[1]} & \textbf{DD[2]} & \textbf{DD[3]} \\
\midrule
Event time $\tau=0$ & 0.042 & 0.052 & 0.045 \\
Event time $\tau=2$ & 0.068 & 0.074 & 0.061 \\
Event time $\tau=4$ & 0.073 & 0.084 & 0.074 \\
Simple average      & 0.060 & 0.067 & --- \\
Cohort-weighted avg.& 0.065 & 0.064 & --- \\
CS aggregate ATT (weighted) & \multicolumn{3}{c}{0.075 [0.051, 0.099]} \\
\midrule
Pre-trend test      & Rejected$^{***}$ & Consistent$^{(a)}$ & Consistent$^{(a)}$ \\
Pre-period implication not rejected & No & Yes & Yes \\
\bottomrule
\end{tabular}
\begin{minipage}{0.92\textwidth}\footnotesize\medskip
\textit{Notes}: $^{***}$$\chi^2(36)=65{,}921$, $p<0.001$.
All ATT estimates significantly different from zero at the 1\% level.
CS aggregate (weighted) from \citet{callaway2021difference} under Parallel[1].
$^{(a)}$Based on the pre-period $R^2$ diagnostics reported in
Table~\ref{tab:orderselection}, not a separately computed formal test
statistic at these orders.
\end{minipage}
\end{table}

\subsection{Weighting Sensitivity}
\label{subsec:weightsens}

Table~\ref{tab:weights} shows the aggregate ATT under three
weighting strategies.
Results are insensitive to weighting choice: DD[2] ranges from
0.064 to 0.067 and DD[1] from 0.060 to 0.065.
The cohort-level results in the penultimate row reveal that the
direction of the DD[2] correction varies by cohort: for the large
2014 cohort, DD[2] is slightly smaller than DD[1] (its pre-trend
was upward), while for later cohorts DD[2] exceeds DD[1] (their
pre-trends were downward).
This heterogeneity in direction is precisely what motivates
cohort-specific counterfactuals.

\begin{table}[H]
\centering
\caption{Weighting Sensitivity: Aggregate DD[$p$] ATT}
\label{tab:weights}
\begin{tabular}{lcc}
\toprule
Weighting strategy & DD[1] & DD[2] \\
\midrule
I.\ Equal cell weight & 0.060 & 0.067 \\
II.\ Cohort-share (proportional to $N_g$) & 0.065 & 0.064 \\
III.\ Event-time weighted ($w\propto\tau+1$) & 0.066 & 0.073 \\
\bottomrule
\end{tabular}
\begin{minipage}{0.72\textwidth}\footnotesize\medskip
\textit{Notes}: Three canonical aggregation strategies.
The Callaway--Sant'Anna aggregate (0.075) uses population
weights and a different aggregation scheme than the equal-weight
and cohort-share schemes reported here.
\end{minipage}
\end{table}

\subsection{Relation to Sensitivity Bounds}
\label{subsec:bounds}

A natural benchmark is the sensitivity bounds approach of \citet{rambachan2023more}, which restricts the magnitude of parallel trends violations and reports how estimated effects vary within that restriction. Their approach delivers valid intervals whose width reflects the assumed maximum violation magnitude. Notably, \citet{rambachan2023more} use the same dataset as an illustrative example in their software package \texttt{HonestDiD}. Their smoothness restriction at $M=0$ imposes that the post-treatment counterfactual gap follows a linear extrapolation
of the pre-existing trend which is conceptually equivalent to DD[2] at Parallel[2].

For $M > 0$, their approach allows deviations from linearity up to a bound of size $M$, providing a natural complement to the point estimates reported here: DD[2] delivers a point estimate under the polynomial structure, while \citet{rambachan2023more} deliver robust intervals that remain valid even if the linear structure is imperfect. DD[2] complements this analysis by showing that if the specific polynomial structure is credible,  as the pre-treatment $R^2$ values and visual evidence suggest for most cohorts, one can achieve point identification without widening the bounds. The two approaches are not competitors: if the $R^2$ diagnostic signals misspecification, the \citet{rambachan2023more} bounds
remain the appropriate tool.

\section{Conclusion}
\label{sec:conc}

When pre-treatment event studies reject standard parallel trends, the applied researcher faces a difficult choice: impose a maintained assumption the data contradict, or report sensitivity bounds that are valid but do not deliver a point estimate. This paper offers a third option: replacing the flatness requirement with a structured polynomial extrapolation whose pre-treatment implications are directly checkable.

The central theoretical contribution is Theorem~\ref{thm:aggregation}, an aggregation result allowing cohorts to be identified under different feasible orders, arising naturally from variation in pre-treatment period counts across staggered adoption cohorts. This problem has no analogue in two-group settings; \citet{egami2023using} sketch a higher-order generalization for staggered designs in their appendix but do not develop an aggregation result, asymptotic theory, or applied procedure for it.

Monte Carlo evidence shows near-nominal coverage when the correct order is selected, robustness to AR(1) serial correlation, and a placebo false positive rate modestly above the nominal 5\,\%, consistent with finite-sample bootstrap behaviour at $B=99$.
Applied to Medicaid expansion, the approach yields estimates resting on an assumption the pre-treatment data do not reject. The Medicaid application uses annual data with 6-11 pre-treatment periods per cohort, representing a conservative environment for the polynomial fit. In administrative panel data at monthly or quarterly frequency, increasingly common in healthcare, labour market, and fiscal policy research, even two to three years of pre-treatment history yields 24-60 observations per cohort, giving the sequential F-test substantially greater power and making DD[3] and higher orders reliably estimable.

Two directions merit future work. \emph{First}, a formal uniform validity result for the complete selection-estimation pipeline would strengthen the inferential foundations. The extended version of \citet{roth2022pretest} on post-selection inference provides a starting point, though adapting those results to the sequential polynomial order selection here requires additional work. \emph{Second}, replacing the polynomial order restriction with derivative-bounded smoothness restrictions \citep{rambachan2023more} would extend the framework to settings where the pre-treatment gap does not follow a polynomial, at the cost of point identification.

\newpage
\bibliographystyle{plainnat}
\bibliography{references}

\appendix

\section{Regularity Conditions}
\label{app:reg}

The following conditions are maintained throughout.

\begin{enumerate}[label=A\arabic*.]
  \item \emph{Independence.}
    Units are mutually independent.
    Treatment assignment $G_i$ is independent of
    $\{Y_{it}(\infty), Y_{it}(g)\}_{t,g}$ conditional on
    group membership (implied by the potential outcomes
    structure).

  \item \emph{Moment conditions.}
    $\E[Y_{it}^2(\infty)] < \infty$ for all $i$, $t$.
    Cohort-level variances
    $\Var(Y_{it}\mid G_i=g) < \infty$ and
    $\Var(Y_{it}\mid G_i=\infty) < \infty$ for all $g$, $t$.

  \item[A2$'$.] \emph{Finite temporal covariance.}
    $|\Cov(Y_{it}(\infty), Y_{is}(\infty)\mid G_i=\infty)| < \infty$
    for all $t, s$.
    This condition is needed because the same never-treated units
    appear in both pre-treatment gaps (used to fit the polynomial)
    and post-treatment gaps, generating a non-zero covariance
    between the counterfactual and the post-treatment term.

  \item \emph{Non-degeneracy.}
    $\pi_g = \lim_{N\to\infty} N_g/N > 0$ for all $g$.
    $\pi_\infty = \lim_{N\to\infty} N_\infty/N > 0$.

  \item \emph{Pre-period richness.}
    $m_g \geq p_g$ for each cohort $g$.
    The Vandermonde matrix $\mathbf{V}$ of pre-treatment time
    polynomials has full column rank ($p_g$ distinct pre-treatment
    periods are observed for each cohort).

  \item \emph{Bounded time.}
    $T < \infty$ is fixed as $N \to \infty$.
\end{enumerate}

Under A1--A5 and A2$'$ and the identifying
Assumptions~\ref{ass:na}--\ref{ass:ptp},
the convergence results of Section~\ref{sec:est} hold with
$\sqrt{N}$ rates.
Asymptotic normality of cohort-level and aggregate ATT estimators
follows from the joint CLT applied to all gap estimates
simultaneously; the cluster bootstrap consistently estimates the
full variance, including cross-cohort covariance terms arising
from the shared never-treated control group (see
Appendix~\ref{app:proof_asym}).

\section{Full Proof of Proposition~\ref{prop:asym}}
\label{app:proof_asym}

\noindent\textbf{Part (i): Cohort-level ATT.}

Write
$\widehat{\mathrm{ATT}}^{(p_g)}(g,t)
= \hat\gamma_{g,t} - \hat\gamma_{g,t}(0)$.

\noindent\emph{Step 1: Asymptotic normality of $\hat\gamma_{g,t}$.}
Under A1--A3:
\[
  \sqrt{N}\bigl(\hat\gamma_{g,t} - \gamma_{g,t}\bigr)
  = \sqrt{N}\Bigl(
    \bar Y_{g,t} - \E[Y_{it}\mid G_i=g]
    - (\bar Y_{\infty,t} - \E[Y_{it}\mid G_i=\infty])
  \Bigr).
\]
The two terms are sample means from independent subsamples of
sizes $N_g = \pi_g N$ and $N_\infty = \pi_\infty N$.
By the CLT and A2:
$\sqrt{N}(\bar Y_{g,t} - \E[Y_{it}\mid G_i=g])
\xrightarrowd \mathcal{N}(0, \Var(Y_{it}\mid G_i=g)/\pi_g)$
and similarly for $\bar Y_{\infty,t}$.
By independence:
\[
  \sqrt{N}(\hat\gamma_{g,t} - \gamma_{g,t})
  \xrightarrowd \mathcal{N}(0, \sigma^2_{g,t}),
  \quad
  \sigma^2_{g,t}
  = \frac{\Var(Y_{it}\mid G_i=g)}{\pi_g}
  + \frac{\Var(Y_{it}\mid G_i=\infty)}{\pi_\infty}.
\]

\noindent\emph{Step 2: Asymptotic normality of $\hat\gamma_{g,t}(0)$.}
Stack the pre-treatment gaps into the vector
$\hat{\boldsymbol{\gamma}}_g^{\mathrm{pre}}
= (\hat\gamma_{g,t_{\min}},\ldots,\hat\gamma_{g,g-1})^\top
\in\mathbb{R}^{m_g}$.
By Step~1 applied jointly across all pre-treatment periods
(A1 ensures independence across units; within-unit temporal
dependence is captured through the covariance matrix
$\boldsymbol{\Sigma}_g$):
\[
  \sqrt{N}\bigl(
  \hat{\boldsymbol{\gamma}}_g^{\mathrm{pre}}
  - \boldsymbol{\gamma}_g^{\mathrm{pre}}
  \bigr)
  \xrightarrowd \mathcal{N}(\mathbf{0}, \boldsymbol{\Sigma}_g),
\]
where $\boldsymbol{\Sigma}_g$ has $(s,r)$-entry equal to the
asymptotic covariance $\lim_{N\to\infty} N \cdot
\Cov(\hat\gamma_{g,s},\hat\gamma_{g,r})$,
consistently estimated from sample second moments.

The OLS estimator of the polynomial coefficients is:
$\hat{\mathbf{c}}_g
= (\mathbf{V}^\top\mathbf{V})^{-1}\mathbf{V}^\top
\hat{\boldsymbol{\gamma}}_g^{\mathrm{pre}}$,
which is a continuous (in fact linear) function of
$\hat{\boldsymbol{\gamma}}_g^{\mathrm{pre}}$.
By A4, $\mathbf{V}^\top\mathbf{V}$ is invertible.
By the delta method (linear functions preserve normality):
\[
  \sqrt{N}(\hat{\mathbf{c}}_g - \mathbf{c}_g)
  \xrightarrowd
  \mathcal{N}(\mathbf{0}, \boldsymbol{\Sigma}_c),
  \quad
  \boldsymbol{\Sigma}_c
  = (\mathbf{V}^\top\mathbf{V})^{-1}\mathbf{V}^\top
  \boldsymbol{\Sigma}_g
  \mathbf{V}(\mathbf{V}^\top\mathbf{V})^{-1}.
\]
Since
$\hat\gamma_{g,t}(0)
= \mathbf{v}_t^\top\hat{\mathbf{c}}_g$,
\[
  \sqrt{N}(\hat\gamma_{g,t}(0) - \gamma_{g,t}(0))
  \xrightarrowd
  \mathcal{N}(0, \mathbf{v}_t^\top\boldsymbol{\Sigma}_c\mathbf{v}_t).
\]

\noindent\emph{Step 3: Joint CLT for pre- and post-treatment gaps
and corrected variance.}

Both $\hat\gamma_{g,t}$ (for $t \geq g$) and
$\hat\gamma_{g,t}(0) = \mathbf{v}_t^\top\hat{\mathbf{c}}_g$
(which depends on pre-treatment gaps through
$\hat{\boldsymbol{\gamma}}_g^{\mathrm{pre}}$)
involve outcomes of the same never-treated units observed at
different time periods.
Under A1, units are independent across each other, but the
same unit contributes to $\bar Y_{\infty,t}$ for all $t$,
so $\hat\gamma_{g,t}$ and $\hat\gamma_{g,t}(0)$ are
\emph{not} asymptotically independent.

Consider the full stacked vector of pre- and post-treatment gaps
for cohort $g$:
\[
  \hat{\boldsymbol{\gamma}}_g^{\mathrm{full}}
  = \begin{pmatrix}
      \hat{\boldsymbol{\gamma}}_g^{\mathrm{pre}} \\
      \hat{\boldsymbol{\gamma}}_g^{\mathrm{post}}
    \end{pmatrix}
  \in \mathbb{R}^{m_g + (T - g + 1)}.
\]
By the multivariate CLT applied jointly under A1--A2$'$:
\[
  \sqrt{N}\bigl(
    \hat{\boldsymbol{\gamma}}_g^{\mathrm{full}}
    - \boldsymbol{\gamma}_g^{\mathrm{full}}
  \bigr)
  \xrightarrowd
  \mathcal{N}\!\left(\mathbf{0},\;
  \boldsymbol{\Sigma}_g^{\mathrm{full}}\right),
  \quad
  \boldsymbol{\Sigma}_g^{\mathrm{full}} =
  \begin{pmatrix}
    \boldsymbol{\Sigma}_{qq} & \boldsymbol{\Sigma}_{qp} \\
    \boldsymbol{\Sigma}_{pq} & \boldsymbol{\Sigma}_{pp}
  \end{pmatrix},
\]
where $\boldsymbol{\Sigma}_{pp}$ is the covariance matrix of
post-treatment gaps, $\boldsymbol{\Sigma}_{qq}$ of pre-treatment
gaps, and the cross-block
\[
  [\boldsymbol{\Sigma}_{pq}]_{t,s}
  = \frac{\Cov(Y_{it}(\infty), Y_{is}(\infty)\mid G_i=g)}{\pi_g}
  + \frac{\Cov(Y_{it}(\infty), Y_{is}(\infty)\mid G_i=\infty)}
         {\pi_\infty}
\]
for $t \geq g$, $s < g$, is generically non-zero (A2$'$
guarantees finiteness).
The hat vector for the polynomial projection is
$\mathbf{h}_t = \mathbf{V}(\mathbf{V}^\top\mathbf{V})^{-1}
\mathbf{v}_t \in \mathbb{R}^{m_g}$.
Writing
$\widehat{\mathrm{ATT}}^{(p_g)}(g,t)
= \hat\gamma_{g,t} - \mathbf{v}_t^\top\hat{\mathbf{c}}_g
= \hat\gamma_{g,t} - \mathbf{h}_t^\top
\hat{\boldsymbol{\gamma}}_g^{\mathrm{pre}}$,
the delta method applied to the joint CLT gives:
\[
  \sqrt{N}\bigl(
    \widehat{\mathrm{ATT}}^{(p_g)}(g,t) - \ATTgt
  \bigr)
  \xrightarrowd
  \mathcal{N}(0, V_{g,t}),
\]
where the correct variance is:
\[
  V_{g,t}
  = \sigma^2_{g,t}
  + \mathbf{v}_t^\top\boldsymbol{\Sigma}_c\mathbf{v}_t
  - 2\,\mathbf{v}_t^\top(\mathbf{V}^\top\mathbf{V})^{-1}
    \mathbf{V}^\top\boldsymbol{\sigma}_{g,t},
\]
with $\boldsymbol{\sigma}_{g,t}$ the vector of asymptotic
covariances $(\lim_{N\to\infty} N \cdot
\Cov(\hat\gamma_{g,s}, \hat\gamma_{g,t}))_{s < g}$
between each pre-treatment gap and the post-treatment gap at $t$.
The third term corrects for the shared never-treated control group
and was absent from earlier drafts; its sign depends on the
temporal correlation structure of the never-treated outcomes and
cannot be determined a priori.

\smallskip
\noindent\textbf{Part (ii): Aggregate ATT.}

The aggregate is
$\hat\theta = \sum_{g,t\geq g}
w_{g,t}\,\widehat{\mathrm{ATT}}^{(p_g)}(g,t)$.
Because all cohorts share the same never-treated control group,
sampling variation in the common control enters every cohort's
ATT estimator.
The covariance matrix of the joint vector
$(\widehat{\mathrm{ATT}}^{(p_g)}(g,t))_{g,t\geq g}$
is therefore \emph{not} block-diagonal by cohort:
cross-cohort covariance terms arise whenever
$g \neq g'$ through the shared control.
The correct aggregate variance is:
\[
  V = \sum_{g,t\geq g}\sum_{g',t'\geq g'}
  w_{g,t}\,w_{g',t'}\,
  \Cov\!\bigl(\widehat{\mathrm{ATT}}^{(p_g)}(g,t),\,
  \widehat{\mathrm{ATT}}^{(p_{g'})}(g',t')\bigr),
\]
including all cross-cohort terms.
Since $\hat\theta$ is a linear function of the jointly
asymptotically normal vector of all gap estimates, asymptotic
normality of $\hat\theta$ follows from the delta method.
The full variance $V$ is consistently estimated by the cluster
bootstrap (Part iii).

\smallskip
\noindent\textbf{Part (iii): Bootstrap consistency.}

The cluster bootstrap resamples entire units with replacement,
including both treated and never-treated units simultaneously.
Because the same never-treated units enter the estimation for all
cohorts and at all time periods, resampling at the unit level
correctly replicates the joint sampling variation of the shared
control group across cohorts and across pre- and post-treatment
periods.
This captures all cross-cohort and pre/post covariance terms
identified in Steps 2--3 and Part (ii) above, without requiring
the researcher to compute them analytically.
Consistency of the cluster bootstrap for the aggregate variance $V$
follows by the bootstrap CLT for sample means under A1--A3 and
A2$'$; see \citet{callaway2021difference} Theorem~3.6
for the analogous argument.
\hfill$\square$

\section{Extrapolation Horizon Simulation}
\label{app:horizon}

This appendix quantifies how DD[$p$] performance changes with the post-treatment extrapolation horizon.

\textbf{Design.}
The panel has $N=300$ units, $T=25$ periods, two cohorts
($g=6,8$) with 5 and 7 pre-treatment periods respectively,
and 40\,\% never-treated.
True ATT $= 0.5$ throughout.
Two DGPs are considered.
\emph{DGP-Linear}: untreated outcomes follow
$Y_{it}(\infty) = \alpha_i + \beta_g t + \lambda_t + \varepsilon_{it}$
with $\beta_g \sim \mathcal{N}(0, 0.40)$ --- true Parallel[2].
\emph{DGP-Quadratic}: $Y_{it}(\infty) = \alpha_i + \beta_g t + c_g t^2 + \lambda_t + \varepsilon_{it}$ with $c_g \sim \mathcal{N}(0, 0.001)$ --- slight curvature that is indistinguishable from a linear trend in the short pre-treatment window but accumulates extrapolation error as $\tau$ grows.
Results are based on 500 replications.

\textbf{Results.}
Table~\ref{tab:horizon} reports bias and RMSE for DD[1], DD[2],
and DD[3] at selected post-treatment horizons under the quadratic
DGP.
Figure~\ref{fig:horizon} displays the full trajectory.

\begin{table}[H]
\centering
\caption{Extrapolation Horizon: Bias and RMSE under Slight Quadratic Curvature}
\label{tab:horizon}
\small
\begin{tabular}{lcccccc}
\toprule
 & \multicolumn{2}{c}{\textbf{DD[1]}}
 & \multicolumn{2}{c}{\textbf{DD[2]}}
 & \multicolumn{2}{c}{\textbf{DD[3]}} \\
\cmidrule(lr){2-3}\cmidrule(lr){4-5}\cmidrule(lr){6-7}
$\tau$ & Bias & RMSE & Bias & RMSE & Bias & RMSE \\
\midrule
1  & $-$0.006 & 0.154 & $-$0.002 & 0.142 & $-$0.001 & 0.224 \\
3  & $-$0.003 & 0.208 &    0.005 & 0.190 &    0.006 & 0.571 \\
5  & $-$0.015 & 0.258 & $-$0.003 & 0.234 & $-$0.003 & 1.114 \\
10 & $-$0.018 & 0.408 &    0.002 & 0.366 & $-$0.003 & 3.298 \\
15 & $-$0.031 & 0.565 & $-$0.002 & 0.506 & $-$0.016 & 6.600 \\
\bottomrule
\end{tabular}
\begin{minipage}{0.94\textwidth}\footnotesize\medskip
\textit{Notes}: 500 replications, $N=300$, $T=25$, true ATT $=0.5$.
DGP: slight quadratic curvature $c_g \sim \mathcal{N}(0, 0.001)$,
indistinguishable from a linear pre-trend with 5--7 pre-treatment
observations. DD[2] bias remains small at all horizons; RMSE grows
from 0.14 at $\tau=1$ to 0.51 at $\tau=15$. DD[3] RMSE grows
much faster under this DGP, reflecting the cost of fitting a
quadratic to only 5--7 pre-treatment observations; with longer
pre-treatment series this cost would be expected to diminish.
\end{minipage}
\end{table}

\begin{figure}[H]
  \centering
  \includegraphics[width=0.96\textwidth]{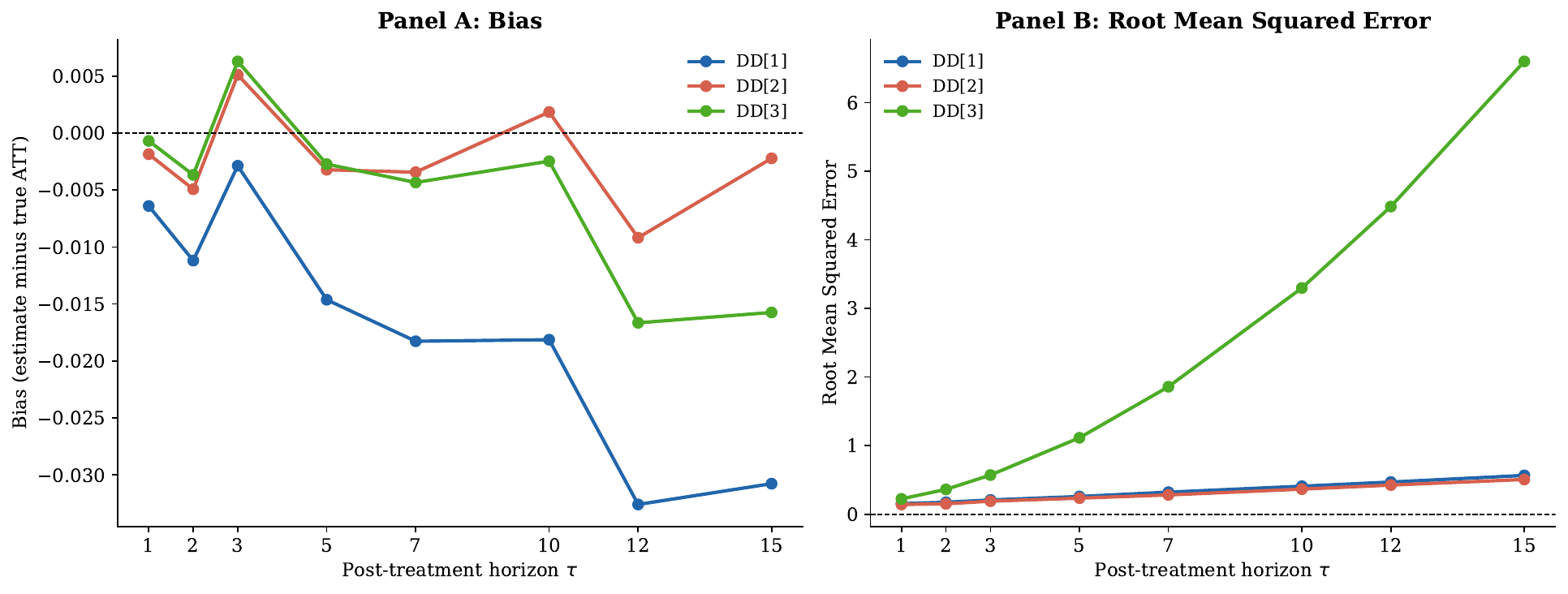}
  \caption{Extrapolation Bias and RMSE by Post-Treatment Horizon}
  \label{fig:horizon}
  \begin{minipage}{0.94\textwidth}\footnotesize\medskip
    \textit{Notes}: Panel A shows bias; Panel B shows RMSE.
    DGP has slight quadratic curvature ($c_g \sim \mathcal{N}(0,0.001)$)
    indistinguishable from a linear pre-trend.
    500 replications, $N=300$, $T=25$, true ATT $=0.5$.
    Bold borders identify the estimator matching the true DGP at
    short horizons.
  \end{minipage}
\end{figure}

\textbf{Interpretation.}
DD[2] bias remains small (maximum 0.009) across all horizons even
under slight curvature, because the linear fit approximately absorbs
the curvature in the pre-period.
The main cost is variance: RMSE grows from 0.14 at $\tau=1$
to 0.51 at $\tau=15$.
DD[2] is reliable for short post-treatment windows but variance grows with horizon.
DD[3] shows the most pronounced degradation under this short-pre-period
DGP; researchers with longer pre-treatment series should consult the
pre-period $R^2$ and sensitivity table before ruling out higher orders.

\section{Stata Replication Code and Applied Workflow}
\label{app:stata}

Complete replication code and companion Stata command implementing the DD[$p$] estimator is available at \url{https://github.com/zecharias2019/anddp}.

\begin{remark}[Applied workflow checklist]
\label{rem:workflow}
For practitioners applying DD[$p$] to a new dataset, the
following sequence is recommended.
\begin{enumerate}[label=Step \arabic*.]
  \item \emph{Setup.}
    Ensure \texttt{gvar} equals the first treatment year for
    treated units and 0 for never-treated.
    Unit and time fixed effects are absorbed implicitly through
    the gap construction; there is no need to add them
    explicitly.

  \item \emph{Establish the baseline.}
    Run \texttt{csdid} and \texttt{estat pretrend}.
    If the joint pre-trend test does not reject, standard
    Parallel[1] is adequate and \texttt{csdid} results can
    be reported as the primary specification.

  \item \emph{Select the polynomial order.}
    If Parallel[1] is rejected, or as a matter of course,
    run \texttt{anddp} with \texttt{maxorder()} to let the
    data guide order selection:
    \begin{quote}
      \texttt{anddp y, ivar() time() gvar() maxorder(4) reps(999)}
    \end{quote}
    The command tests each cohort's pre-treatment gap
    independently and selects the lowest order not rejected
    by the sequential F-test, reporting cohort-specific
    diagnostic information (pre-period $R^2$, F-test p-value,
    effective order, and weight).
    Stability of estimates across the order sensitivity table
    and pre-period $R^2$ values are the primary practical
    guides; a high $R^2$ and stable aggregate ATT across
    orders support the selected specification.

    Alternatively, researchers may impose a specific order
    when the pre-treatment data clearly support it:
    \begin{quote}
      \texttt{anddp y, ivar() time() gvar() order(2) reps(999)}
    \end{quote}
    The choice \texttt{order(p)} fixes order $p$ for all
    cohorts (subject to the feasibility cap $p_g \leq m_g-1$)
    and is appropriate when theory or the diagnostic evidence
    motivates a particular specification.
    Beyond the formal diagnostics, plotting the pre-treatment
    gap directly is often informative: a steadily rising or
    falling trajectory supports order 2, while visible curvature
    suggests order 3.

  \item \emph{(Optional) Covariate adjustment.}
    If treated and never-treated units differ on observable
    time-varying characteristics, add covariates:
    \begin{quote}
      \texttt{anddp y, ivar() time() gvar() order(2)
      covariates(x1 x2) reps(999)}
    \end{quote}
    Covariates enter the polynomial regression alongside time;
    the counterfactual extrapolates both the polynomial and
    the observed post-treatment covariate gaps.
    The identifying assumption becomes Parallel[$p$]
    conditional on the covariate gaps.

  \item \emph{Sensitivity across orders.}
    Report the aggregate ATT under orders 1, 2, and 3
    side by side using the sensitivity table that
    \texttt{anddp} produces automatically.
    Different orders may perform differently depending on
    the dataset: when pre-treatment series are long and
    $R^2$ supports it, a higher order can be the most
    credible primary specification.
    Note which cohorts have fewer than three pre-treatment
    periods, as inference for those cells is fragile.

  \item \emph{Check cohort sizes for reliable inference.}
    Inspect the Weight column in the diagnostics table, which
    reflects each cohort's share of treated units.
    When a cohort contains very few units --- a single state,
    in the extreme, as with the 2017 Medicaid cohort below ---
    standard cluster bootstrap inference for that cohort's ATT
    is unreliable: the bootstrap distribution becomes close to
    discrete, and the reported SE understates genuine
    uncertainty.  This is a general limitation of cluster
    bootstrap inference with few clusters, not specific to
    DD[$p$].  The aggregate ATT is typically more robust, since
    small cohorts carry low population weight, but cohort-level
    inference for small cohorts should be interpreted with
    caution.  The \texttt{wcbootstrap} option in \texttt{anddp}
    addresses this directly: it perturbs cluster-level residuals
    with random signs rather than resampling clusters, and
    remains valid for inference even when a cohort has only one
    or two units.

  \item \emph{Long post-treatment windows.}
    If the post-treatment window exceeds ten periods,
    inspect Appendix~\ref{app:horizon} for evidence on
    how extrapolation reliability varies with horizon.
    Complementing the point estimates with sensitivity bounds
    \citep{rambachan2023more} is advisable when the horizon
    is long relative to the pre-treatment series.
\end{enumerate}
\end{remark}

\section{Comparison of Parallel Trends Relaxations}
\label{app:comparison}

Table~\ref{tab:comparison} summarizes the relationship
between standard DiD, the \citet{dobkin2018economic}
linear-trend correction, and the DD[$p$] estimator
proposed in this paper.

\begin{table}[H]
\centering
\caption{Comparison of Parallel Trends Relaxations}
\label{tab:comparison}
\small
\setlength{\tabcolsep}{5pt}
\renewcommand{\arraystretch}{1.4}
\begin{tabular}{p{2.8cm}p{3.2cm}p{3.2cm}p{3.2cm}}
\toprule
 & \textbf{Parallel[1]} (standard DiD)
 & \textbf{Dobkin et al.\ (2018)}
 & \textbf{DD[$p$] (this paper)} \\
\midrule
\textit{Assumption} &
  Untreated gap is flat: $\Delta Y_{g,t}(\infty)=0$ &
  Untreated gap is linear: $\Delta^2 Y_{g,t}(\infty)=0$
  (Parallel[2]) &
  $p$-th difference of untreated gap is equal across
  groups: $\Delta^p Y_{g,t}(\infty) = 0$ \\
\textit{Slope estimated from} &
  No slope &
  Pre-treatment data only (post-treatment dummies
  are saturated) &
  Pre-treatment data only \\
\textit{Order selection} &
  Fixed at $p=1$ &
  Fixed at $p=2$; chosen by inspection &
  Sequential test selects smallest $p$ not rejected
  by pre-treatment data \\
\textit{Staggered adoption} &
  Yes (Callaway--Sant'Anna) &
  No; single treatment timing &
  Yes; formal aggregation across cohorts with
  different feasible orders \\
\textit{Aggregation} &
  CS weights &
  Not addressed &
  Theorem~\ref{thm:aggregation}; valid under
  cohort-heterogeneous orders \\
\textit{Extrapolation risk} &
  Flat gap assumed to continue &
  Linear trend extrapolated; horizon not addressed &
  Polynomial extrapolated; Remark~\ref{rem:roth_poly}
  quantifies risk by horizon \\
\bottomrule
\end{tabular}
\begin{minipage}{0.98\textwidth}\footnotesize\medskip
\textit{Notes}: Dobkin et al.'s linear-trend correction is
formally equivalent to the $p=2$ case of Parallel[$p$] applied to a single cohort (Jonathan Roth, personal communication, 2026).
All three approaches rely on extrapolation into the
post-treatment period; none is testable post-treatment.
\end{minipage}
\end{table}

\end{document}